\newmdenv[%
    leftmargin=0.5cm,
    tikzsetting={line width=2.0pt}%
    ]{SpecialText}%
\tikzset{->-/.style={decoration={
  markings,
  mark=at position #1 with {\arrow{>}}},postaction={decorate},thick}}
\newenvironment{myitemize}
{ \begin{itemize}
    \setlength{\itemsep}{0pt}
    \setlength{\parskip}{0pt}
    \setlength{\parsep}{0pt}     }
{ \end{itemize}                  } 
\theoremstyle{theorem}
\newtheorem{proposition}{Proposition}[section]
\theoremstyle{theorem}
\newtheorem{lemma}{Lemma}[section]
\theoremstyle{theorem}
\newtheorem{theorem}{Theorem}[section]
\theoremstyle{theorem}
\newtheorem{corollary}{Corollary}[section]
\theoremstyle{definition}
\newtheorem{definition}{Definition}[section]
\newcommand{\ifthen}[3]{\textbf{if } #1 \textbf{ then } #2 \textbf{ else } #3}
\renewcommand{\whiledo}[2]{\textbf{while } #1 \textbf{ do } #2}
\newcommand*{\defeq}{\stackrel{\text{def}}{=}}
\newcommand{\dom}[1]{\operatorname{dom}(#1)}
\newcommand{\upeq}{\stackrel{\uparrow}{=}}
\NewDocumentCommand{\up}{som}{%
  \IfBooleanTF{#1}
    {\upext{#3}}
    {\uparrow#3\IfNoValueTF{#2}{\mathord}{#2}}%
}
\NewDocumentCommand{\upext}{m}{%
  \uparrow\mleft.\kern-\nulldelimiterspace#1\mright
}
\title{Hal: A Language-General Framework for Analysis of User-Specified Monotone Frameworks}
\author{Abdullah Rasheed\\\small Kansas State University}
\date{ }
\begin{document}

\maketitle

\tableofcontents

\section{Introduction}

In Static Program Analysis, we are interested in finding properties of a given program. This is especially important in compiler-level optimizations, program verification, and much more. An issue that is faced, however, is language and domain generality.

For a given program property, we must write a new analyzer for each independent language because of their unique syntax and semantics. Furthermore, we must write a new program analyzer (in each language) for each new program property.

The language issue can be solved by allowing users to specify their own language syntax, and generating a parser and program analyzer given the syntax. To solve the domain issue (in a language-general manner), a framework for approaching dataflow analyses (a very large class of program analyses, later explained) is proposed in this paper. This framework aims to make user-specification of dataflow analyses in any language simple, and minimize the amount of specification required by the user (by inferring as much as possible, avoiding complications for the user).

\textbf{Section 2} covers the necessary mathematical background. While \textbf{Section 2.2-2.4} may be easier to understand with a background in grammars, such a background is unnecessary.

\textbf{Section 3} covers a background in \emph{intraprocedural} dataflow analysis, following closely (with minor differences) to \cite{popa}. This section is not entirely necessary, however it provides a very helpful intuition for the remainder of the paper. All that is needed from this section is the definitions at the beginning.

\textbf{Section 4} covers the proposed approach to \emph{interprocedural} dataflow analyses, where we vary from \cite{popa}. \textbf{Section 4.2.1} is closely accompanied by \textbf{Section 5.1}. \textbf{Section 4.2.3} covers the essential inferences and many important assumptions.

\textbf{Section 5} covers the approach to the ``new'' property space that is actually used in the implementation. It is also discussed why this approach is used as opposed to that in \textbf{Section 4.2.1}. Finally, the section ends by demonstrating that the two approaches actually work quite well with each other.

\section{Preliminaries}

In this section we present many basic definitions and results that are used throughout the paper. We will not prove any results in this section, but encourage the readers to prove the results themselves.

\newpage
\subsection{The Basics}

\begin{definition}
    A \emph{partial order} $\sqsubseteq$ over $S$ is a relation on $S$ satisfying
    \begin{myitemize}
        \item Reflexivity ($s \sqsubseteq s$)
        \item Antisymmetry ($(s \sqsubseteq s' \wedge s' \sqsubseteq s) \implies s = s'$)
        \item Transitivity ($(s_1\sqsubseteq s_2 \wedge s_2\sqsubseteq s_3) \implies s_1\sqsubseteq s_3$)
    \end{myitemize}
\end{definition}

\begin{definition}
    A \emph{partially-ordered} set (or \emph{poset}) $(S,\sqsubseteq_S)$ is a set $S$ equipped with a partial order $\sqsubseteq_S$ over $S$. We include the subscript on $\sqsubseteq$ to avoid ambiguities when working with multiple posets, however when there is no fear of ambiguity we will feel free to leave off the subscript.
\end{definition}

\begin{definition}
    Let $(S,\sqsubseteq)$ be a poset. An element $s \in S$ is an \emph{upper bound} of $P \subseteq S$ if $p \sqsubseteq s$ for all $p \in P$. Furthermore, $s$ is a \emph{least upper bound} of $P$ if $s \sqsubseteq s'$ for any upper bound $s'$ of $P$. A \emph{lower bound} and \emph{greatest lower bound} is defined dually.
\end{definition}

\begin{lemma}
    If a subset of a poset has a least upper bound, then it (the least upper bound) is unique. This is also true for greatest lower bounds.
\end{lemma}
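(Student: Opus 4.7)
The plan is to prove uniqueness by the standard antisymmetry trick: assume two candidates and show each must be $\sqsubseteq$ the other.

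First I would suppose that $P \subseteq S$ has two least upper bounds, call them $s$ and $s'$. The key observation is that each of $s$ and $s'$ is itself an upper bound of $P$ (by definition of least upper bound, which requires being an upper bound in the first place). So I can apply the ``least'' property of $s$ against the upper bound $s'$ to obtain $s \sqsubseteq s'$, and symmetrically apply the ``least'' property of $s'$ against the upper bound $s$ to obtain $s' \sqsubseteq s$. Antisymmetry of $\sqsubseteq$ then forces $s = s'$.

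For the greatest lower bound case, the argument is dual: given two greatest lower bounds $t$ and $t'$ of $P$, each is in particular a lower bound, so the ``greatest'' property applied in both directions yields $t' \sqsubseteq t$ and $t \sqsubseteq t'$, and antisymmetry again gives $t = t'$. I do not anticipate any real obstacle here — the result uses only antisymmetry together with the definitions stated just above — so the main task is simply to present the two short symmetric arguments cleanly without conflating ``upper bound'' and ``least upper bound.''
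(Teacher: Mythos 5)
Your proof is correct and is exactly the standard antisymmetry argument the paper intends (the paper leaves this lemma to the reader, but there is essentially only one proof): each candidate least upper bound is an upper bound, so leastness applied both ways gives $s \sqsubseteq s'$ and $s' \sqsubseteq s$, whence $s = s'$ by antisymmetry, with the dual argument for greatest lower bounds. No gaps.
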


The unique least upper bound of a subset $P \subseteq S$ is denoted $\bigsqcup_S P$ (called the \emph{join}) when it exists. This can also be written infix, where $s \sqcup_S s' = \bigsqcup_S \{s,s'\}$. We again feel free to leave off the subscript from $\bigsqcup$ and $\sqcup$ when there is no fear of ambiguity. The operators $\bigsqcap,\sqcap$ are defined dually for the unique greatest lower bound.

\begin{definition}
    A poset $(S,\sqsubseteq)$ is a \emph{complete lattice} if every subset $P \subseteq S$ has a least upper bound and a greatest lower bound. Then, $\bigsqcup\emptyset = \bigsqcap S = \bot_S$ is the \emph{least element} (also called the \emph{bottom}) and $\bigsqcap \emptyset = \bigsqcup S = \top_S$ is the \emph{greatest element} (also called the \emph{top}).
\end{definition}

Throughout this paper, we will refer to complete lattices simply as lattices.

\begin{definition}
    A subset $P \subseteq L$ of a lattice $(L, \sqsubseteq)$ is a \emph{chain} if each of its elements are comparable. That is, $\forall l_1,l_2 \in P$, $l_1 \sqsubseteq l_2$ or $l_2 \sqsubseteq l_1$. The elements of $P=(l_i)_{i \in \mathcal{I}}$ form a ``chain'' in the sense that $\cdots \sqsubseteq l_{i} \sqsubseteq l_{i'} \sqsubseteq \cdots$ for some arrangement $\ldots,i,i',\ldots$ of $\mathcal{I}$.
\end{definition}

\begin{definition}
    A lattice $L$ satisfies the \emph{Ascending Chain Condition} if each chain $l_1 \sqsubseteq l_2 \sqsubseteq \cdots$ in $L$ eventually stabilizes. That is, $\exists n$ such that $l_n = l_{n+1} = \cdots$.
\end{definition}

\begin{definition}
    Let $(L,\sqsubseteq)$ be a poset, and $P\subseteq L$ where $m = \bigsqcup P$ exists. If $m \in P$, then we say $\operatorname{max}_{\sqsubseteq}(P)$ exists, and $\operatorname{max}_{\sqsubseteq}(P) = m$.
\end{definition}

\begin{definition}
    The set of all functions $f : A \rightarrow B$ is denoted simply by $A \rightarrow B$. The set of partial functions $f : A \rightharpoonup B$ from $A$ to $B$ is denoted by $A \rightharpoonup B$. Furthermore, $(A \rightarrow B) \subseteq (A \rightharpoonup B)$.
\end{definition}

\begin{definition}
    For $f \in A \rightharpoonup B$, $\dom{f} = \{a \mid f(a) \text{ is defined}\}$. Also, $\operatorname{Im}(f) = \{f(a) \mid a \in \dom{f}\}$.
\end{definition}

\begin{definition}
    A function $f : L_1 \rightarrow L_2$ over posets $(L_1,\sqsubseteq_1),(L_2,\sqsubseteq_2)$ is called \emph{monotone} or \emph{order-preserving} if for all $l,l' \in L_1$, $l \sqsubseteq_1 l' \implies f(l) \sqsubseteq_2 f(l')$.
\end{definition}

\subsection{SimpleHal Syntax}
Here we introduce the syntax of a basic programming language that will be referenced throughout this section in examples.

\subsubsection{Specifications}

We write syntax domains beginning with an uppercase letter in bold for clarity.

\begin{figure}[H]
    \centering
    \begin{SpecialText}
    Syntax Domains
    \begin{align*}
        n &\in \textbf{Num}
        & a &\in \textbf{AExp}\\
        x &\in \textbf{Var}
        & b &\in \textbf{BExp}\\
        c &\in \textbf{Comm}
    \end{align*}
    Syntax Grammars
    \begin{align*}
        \textbf{Num} &\rightarrow 0 \mid 1 \mid \cdots\\
        \textbf{AExp} &\rightarrow n \mid x \mid -a \mid a_1 + a_2 \mid a_1 - a_2 \mid a_1 \times a_2 \mid a_1 / a_2\\
        \textbf{BExp} &\rightarrow \neg b \mid b_1 \wedge b_2 \mid b_1 \vee b_2 \mid a_1 = a_2 \mid a_1 > a_2 \mid a_1 \geq a_2\\
        \textbf{Comm} &\rightarrow x := a \mid \operatorname{read}(x) \mid c_1;c_2 \mid \textbf{if } b \textbf{ then } c_1 \textbf{ else } c_2 \\ &\hspace{12pt}\mid \textbf{while } b \textbf{ do } c
    \end{align*}
\end{SpecialText}
    \caption{SimpleHal Grammar}
    \label{fig:simplehalgrmr}
\end{figure}

It is important to note that the grammar in Fig. \ref{fig:simplehalgrmr} is certainly ambiguous, and would not be feasible for use in a parser. Parsing is not a focus in this paper, so this issue does not affect us. We may assume typical associativity (and note that ``;'' in $\mathbf{Comm}$ may associate either to the left or the right, both being equivalent; that is, $c_1;(c_2;c_3) \equiv (c_1;c_2);c_3$).

We will feel free to use $\operatorname{\mathbf{Comm}}(P)$ to denote all commands in a given program $P$ (and similarly for all other syntax domains).

\subsection{Labeling Programs}

For our analyses, we will find it convenient to label points in a program where control may eventually be passed to. Throughout this paper, we will call these points \emph{blocks}. The set of all blocks appearing in a labeled program $P$ is denoted $\operatorname{blocks}(P)$.

To illustrate these notions, consider the program
\begin{equation*}
    x\coloneqq 3;\;\operatorname{read(y)};\;\ifthen{x > y}{x\coloneqq x - 1}{y\coloneqq y-1}
\end{equation*}
We may label this program in the following way:
\begin{equation*}
    [x\coloneqq 3]_1;\; [\operatorname{read}(y)]_2;\; \ifthen{[x>y]_3}{[x\coloneqq x-1]_4}{[y\coloneqq y-1]_5}
\end{equation*}
Observe that (assuming an intuitive semantics for the language) only the components of the program are labeled. These components are the blocks of the program, as ``control'' may fall upon any of them. 

This may seem too informal, which is entirely correct. We leave the definition of blocks vague because each language may differ in what ought to be considered a block (and this may even differ from analysis to analysis in the same language). That being said, the user will be free to consider whatever they may choose as blocks in their language of choice. The only assumption we make regarding blocks here is that there are only a finite number of blocks in a given program, which is a reasonable assumption to make.

With blocks ``externally'' defined, we may provide some formalization to the notion of labelings.
\begin{definition}
    A \emph{$S$-labeling} on a program $P$ is a function $f : S \rightarrow \operatorname{blocks}(P)$
\end{definition}
Throughout this program, we will assume each program to be labeled (semantically defined by the user in practice). It is, however, cumbersome to keep defining a function as the labeling. To avoid this, we will use the notation $\llbracket B\rrbracket_{\ell}$ to mean \emph{the} block $B \in \operatorname{blocks}(P)$ that is labeled by $\ell \in S$ (that is, the unique $B \in \operatorname{blocks}(P)$ such that $f(\ell) = B$ where $f$ is the labeling on $P$).

To remain faithful to the implementation, one could instead define \emph{partial} labelings, however the analyzer knows the domain for which the function is defined on. That is, for a labeling $f$, the analyzer is only ``aware of'' $\dom{f}$, and thus it is treated essentially as a total function in $\dom{f} \rightarrow \operatorname{blocks}(P)$.

\subsection{Program Flows}

Now we would like to see how control \emph{flows} between blocks in a program. In particular, for a $S$-labeled program $P$, we would like to define a set $\operatorname{flow}(P) \subseteq S \times S$ where $(\ell,\ell') \in \operatorname{flow}(P)$ if and only if control may pass directly from $\llbracket B\rrbracket_{\ell}$ to $\llbracket B'\rrbracket_{\ell'}$. Visually, $\operatorname{flow}(P)$ is just a flow chart for $P$. For the remainder of this section, we will illustrate this idea using SimpleHal.

First, allow us to formally define the start and end points of a given program $P \in \mathbf{Comm}$.
\begin{figure}[H]
    \centering
    \begin{SpecialText}
    Initial program labels definition
    \begin{align*}
        \operatorname{init}([x := a]_{\ell}) &= \ell\\
        \operatorname{init}([\operatorname{read}(x)]_{\ell}) &= \ell\\
        \operatorname{init}(c_1;c_2) &= \operatorname{init}(c_1)\\
        \operatorname{init}(\ifthen{[b]_{\ell}}{c_1}{c_2}) &= \ell\\
        \operatorname{init}(\whiledo{[b]_{\ell}}{c}) &= \ell
    \end{align*}
    Final program labels definition
    \begin{align*}
        \operatorname{final}([x := a]_{\ell}) &= \{\ell\}\\
        \operatorname{final}([\operatorname{read}(x)]_{\ell}) &= \{\ell\}\\
        \operatorname{final}(c_1;c_2) &= \operatorname{final(c_2)}\\
        \operatorname{final}(\ifthen{[b]_{\ell}}{c_1}{c_2}) &= \operatorname{final}(c_1) \cup \operatorname{final}(c_2)\\
        \operatorname{final}(\whiledo{[b]_{\ell}}{c}) &= \{\ell\}
    \end{align*}
\end{SpecialText}
    \caption{Defining initial and final program labels in SimpleHal}
    \label{fig:initlabels}
\end{figure}

As previously stated, we are interested in how the program flows, i.e. how data/control flows through the program. From a labeled program $P$, we construct a ``flow chart'' $\operatorname{flow}(P)$ with the definitions in Fig. \ref{fig:flowdef}.
\begin{figure}[H]
    \centering
    \begin{SpecialText}
    \begin{tabular}{l l l}
        $\operatorname{flow}([x := a]_{\ell})$ & $=$ & $\emptyset$ \\[4pt]
        $\operatorname{flow}([\operatorname{read}(x)]_{\ell})$ & $=$ & $\emptyset$ \\[4pt]
        $\operatorname{flow}(c_1;c_2)$ & $=$ & $\operatorname{flow}(c_1) \cup \operatorname{flow}(c_2)$\\&&$\cup \{(\ell, \operatorname{init}(c_2)) \mid \ell \in \operatorname{final}(c_1)\}$ \\[4pt]
        $\operatorname{flow}(\ifthen{[b]_{\ell}}{c_1}{c_2})$ & $=$ & $\operatorname{flow}(c_1) \cup \operatorname{flow}(c_2)$\\&&$\cup \{(\ell, \operatorname{init}(c_1)), (\ell, \operatorname{init}(c_2))\}$\\[4pt]
        $\operatorname{flow}(\whiledo{[b]_{\ell}}{c})$ & $=$ & $\operatorname{flow}(c) \cup \{(\ell, \operatorname{init}(c))\}$\\&&$\cup \{(\ell',\ell) \mid \ell' \in \operatorname{final}(c)\}$
    \end{tabular}
\end{SpecialText}
    \caption{Defining the flow of a SimpleHal program}
    \label{fig:flowdef}
\end{figure}

Finally, we define
\begin{equation*}
    \operatorname{flow}^{-1}(P) \defeq \{(\ell', \ell) \mid (\ell, \ell') \in \operatorname{flow}(P)\}
\end{equation*}
That is, $\operatorname{flow}^{-1}(P)$ is the \emph{backwards} flow of $P$.

Let us do one concrete example. Consider the program $P$ from earlier:
\begin{equation*}
    [x\coloneqq 3]_1;\; [\operatorname{read}(y)]_2;\; \ifthen{[x>y]_3}{[x\coloneqq x-1]_4}{[y\coloneqq y-1]_5}
\end{equation*}
Observe that $\operatorname{init}(P) = 1$ and $\operatorname{final}(P) = \{3,4\}$. Now, we may follow our definition for $\operatorname{flow}$ in SimpleHal to find that
\begin{equation*}
    \operatorname{flow}(P) = \{(1,2),(2,3),(3,4),(3,5)\}
\end{equation*}

\section{Intraprocedural Analysis}

This section will move by quickly to save time for the section on interprocedural analyses, which are the main focus. This section may serve as a useful contextualization, however if you are in a rush, feel free to simply read the definition of a Monotone Framework, look at the diagrams, and move on to interprocedural analyses. 

Here we will define an abstract framework for an intraprocedural dataflow analysis. A \emph{Monotone Framework} consists of
\begin{myitemize}
    \item A lattice $(L, \sqsubseteq)$ of dataflow properties/values
    \item A set of start points $S$
    \item A set of program flows $F$
    \item An initial dataflow value $\lambda \in L$
    \item A transfer function $f_{\ell} : L \rightarrow L$ for each program point $\ell$
\end{myitemize}

These together form a Monotone Framework $(L,S,F,\lambda,\mathcal{F})$ where $\mathcal{F}$ consists of all transfer functions. While \cite{popa} imposes identity and closure of composition requirements on $\mathcal{F}$, we relax these requirements and only impose the monotonicity requirement (all $f_{\ell} \in \mathcal{F}$ is monotone) to be used much later.

We additionally impose that $L$ satisfies the Ascending Chain Condition for purposes of decidability. This will later be elaborated.

The idea is that $S$ is either the start point of the program or the set of end points, and we work forward (using the forward flows as $F$) or backward (using the backward flows as $F$) respectively. The initial dataflow value is the property that a program should start or end with (depending on what $S$ was chosen). Finally, as we move through the flows of $F$ (starting from $S$), we use the transfer functions to update the dataflow information as necessary at the given program point.

A crucial intuition for $L$ is that its partial ordering $\sqsubseteq$ is an ordering on the amount of information/precision each element in $L$ carries. That is, if two dataflow values/properties $l_1,l_2\in L$ are related in the information they carry, but $l_2$ is more abstract than $l_1$ (that is to say that $l_1$ is more precise or concrete), then we have that $l_1 \sqsubseteq l_2$ (and vice versa).

To provide a little context, we will instead consider two general frameworks: forward intraprocedural dataflow analyses and backward ones.

\subsection{Forward Dataflow Analysis}

Here, we still require the same components, but we may contextualize them a little better. Let $P$ be our program. Then, we take the previously defined framework but with $S = \operatorname{init}(P)$ and $F = \operatorname{flow}(P)$.

Now, we may define the equations for solving such an analysis:
\begin{align*}
    \operatorname{DF}_{entry}(\ell) &= \begin{cases}
        \lambda \sqcup \bigsqcup_{(\ell',\ell) \in \operatorname{flow}(P)} \operatorname{DF}_{exit}(\ell') & \ell \in \operatorname{init}(P)\\[5pt]
        \bigsqcup_{(\ell',\ell) \in \operatorname{flow}(P)} \operatorname{DF}_{exit}(\ell') & \ell \notin \operatorname{init}(P)
    \end{cases}\\
    \operatorname{DF}_{exit}(\ell) &= f_{\ell}(\operatorname{DF}_{entry}(\ell))
\end{align*}
While this may seem awkward to digest, let's take a look at how this may look visually. Consider the program
\begin{equation*}
    [x := 3]_1;[y := 4]_2;\whiledo{[x > 1]_3}{([x := x - 1]_4;[y := x \times y]_5)}
\end{equation*}
with the flow shown in Fig. \ref{fig:flowchartdesc}.

\begin{figure}[h]
    \centering
    \tikzstyle{arrow} = [thick,->,>=stealth]
\tikzstyle{line} = [thick,-,>=stealth]
\tikzstyle{block} = [rectangle, rounded corners, minimum width=3cm, minimum height=1cm,text centered, draw=black]
\begin{tikzpicture}[node distance=2cm]
\node (1) [block] {$x \coloneqq 3$};
\node (2) [block, below of=1] {$y \coloneqq 4$};
\node (3) [block, below of=2, yshift=-2cm] {$x > 1$};
\node (4) [block, below of=3] {$x \coloneqq x - 1$};
\node (5) [block, below of=4] {$y \coloneqq x \times y$};

\draw [arrow] ++(0,2) -- node[anchor=west] {$\operatorname{DF}_{entry}(1) = \lambda$} (1.north);
\draw [arrow] (1) -- node[anchor=west] {$\operatorname{DF}_{exit}(1) = \operatorname{DF}_{entry}(2)$} (2);
\draw [line] (2.south) -- node[anchor=west] {$\operatorname{DF}_{exit}(2)$} ++(0,-1.5);
\draw [->-=.5] (5.east) -- ++(2,0);
\draw [->-=.5] (5.east)+(2,0) -- node[anchor=west] {$\operatorname{DF}_{exit}(5)$} ++(2,6);
\draw [->-=.5] (5.east)+(2,6) -- ++(-1.5,6);
\draw [arrow] (2.south)+(0,-1.5) -- node[anchor=east] {$\operatorname{DF}_{exit}(2) \sqcup \operatorname{DF}_{exit}(5) = \operatorname{DF}_{entry}(3)$} ++(3);
\draw [arrow] (3) -- node[anchor=east] {$\operatorname{DF}_{exit}(3) = \operatorname{DF}_{entry}(4)$} (4);
\draw [arrow] (4) -- node[anchor=east] {$\operatorname{DF}_{exit}(4) = \operatorname{DF}_{entry}(5)$} (5);
\draw [arrow] (3.west) -- node[anchor=east,xshift=-1cm] {$\operatorname{DF}_{exit}(3)$} ++(-2,0);
\end{tikzpicture}
    \vspace{7pt}\caption{Flow chart of a program}
    \label{fig:flowchartdesc}
\end{figure}

Observe how each block takes the preceding data flow information, changes it in a way consistent with the analysis (using the blocks transfer function $f_{\ell}$, which is not explicitly stated in the diagram), and passes it on. In the case where there are multiple points of entry for a block, we have to merge/combine the data flow information from each point to get the best and most correct possible data flow information at the entry of that block.

For concrete examples of forward dataflow analyses, refer to \cite{popa}.

\subsection{Backward Dataflow Analysis}

Similar to the previous section, we will show how Monotone Frameworks can be used to define \emph{backward} analyses. Let $P$ be our program. We now set $S = \operatorname{final}(P)$ and $F = \operatorname{flow}^{-1}(P)$. So we instead start at the end and work our way backwards. The flow chart will look the same as in the previous section except that the arrows are all reversed.

The equations here now become
\begin{align*}
    \operatorname{DF}_{exit}(\ell) &= \begin{cases}
        \lambda \sqcup \bigsqcup_{(\ell',\ell) \in \operatorname{flow}^{-1}(P)} \operatorname{DF}_{exit}(\ell') & \ell \in \operatorname{final}(P)\\[5pt]
        \bigsqcup_{(\ell',\ell) \in \operatorname{flow}^{-1}(P)} \operatorname{DF}_{exit}(\ell') & \ell \notin \operatorname{final}(P)
    \end{cases}\\
    \operatorname{DF}_{entry}(\ell) &= f_{\ell}(\operatorname{DF}_{entry}(\ell))
\end{align*}
For concrete examples of backward dataflow analyses, refer to \cite{popa}.

\subsection{Solving Intraprocedural Dataflow Analyses}

It is clear that the equations for solving intraprocedural dataflow analyses follow the same pattern. That is,
\begin{align*}
    \operatorname{DF}_{\circ}(\ell) &= \lambda_S^{\ell} \sqcup \bigsqcup_{(\ell',\ell) \in F} \operatorname{DF}_{\bullet}(\ell')\\
    \operatorname{DF}_{\bullet}(\ell) &= f_{\ell}(\operatorname{DF}_{\circ}(\ell))
\end{align*}
where
\begin{equation*}
    \lambda_S^{\ell} = \begin{cases}
        \lambda & \text{if } l \in S\\
        \bot & \text{otherwise}
    \end{cases}
\end{equation*}
A intuitive mathematical solution to this is the \emph{Meet Over all Paths} (MOP) solution. Although it is called the \emph{Meet} Over all Paths, we are instead taking the join over all paths (simply a philosophical difference; we define things dually to how they are historically defined). The idea behind this solution is to ``follow'' the transfer functions through the flows that lead up to $\ell$ for each point $\ell$.

We do not go in depth into this solution because MOP is unfortunately an undecidable problem. \cite{Kam1977} proves that if the MOP solution is decidable, then so is the so-called \emph{Modified} Post-Correspondence Problem (MPCP). \cite{hopcroft1969formal} shows MPCP is undecidable, and so MOP is undecidable.

Instead what we will resort to is the \emph{Maximal Fixed Point} (MFP) solution (which, in a similar fasion to MOP, is backwards for us; we are instead computing the \emph{minimal} fixed point), which is a decidable approximation to the correct MOP solution. An algorithmic solution to MFP is outlined in \cite{popa}, as well as a proof that it safely approximates MOP (both following those of \cite{Kam1977}).

Later in this paper we will be modifying the algorithm outlined in \cite{popa} to suit the new machinery we introduce in the following section.

\section{Interprocedural Analysis}

\subsection{The Struggles}

Suppose we have procedures/functions in our language (which we will later define). Attempting to solve a dataflow analysis now becomes a little more complicated. Let us cover a few of the reasons that might cause us to think a little harder than we would for an intraprocedural analysis.

\subsubsection{Invalid Paths}

Consider a program represented by the flow chart shown in Fig. \ref{fig:interprocflow}.

\begin{figure}[h]
    \centering
    \tikzstyle{arrow} = [thick,->,>=stealth]
\tikzstyle{line} = [thick,-,>=stealth]
\tikzstyle{block} = [rectangle, rounded corners, minimum width=3cm, minimum height=1cm,text centered, draw=black]
\tikzstyle{text} = [rectangle, rounded corners, minimum width=3cm, minimum height=1cm,text centered, draw=none]
\begin{tikzpicture}[node distance=1.5cm]
    \node (alabel) {\textbf{Procedure `A'}};
    \node (a1) [block, below of=alabel, yshift=0.5cm] {a1};
    \node (a2) [block, below of=a1] {a2 [Call \textbf{A}]};
    \node (a3) [block, below of=a2] {a3 [Return from \textbf{A}]};
    \node (a4) [block, below of=a3] {a4 [End \textbf{A}]};
    \node (b1) [block, right of=a1, xshift=4cm] {b1};
    \node (blabel) [above of=b1, yshift=-0.5cm] {\textbf{Procedure `B'}};
    \node (b2) [block, below of=b1] {b2 [if statement]};
    \node (b3) [block, below of=b2, right of=b2, xshift=1cm] {b3};
    \node (b4) [block, below of=b3] {b4 [Call \textbf{B}]};
    \node (b5) [block, below of=b4] {b5 [Return from \textbf{B}]};
    \node (b6) [block, below of=b5, left of=b5, xshift=-1cm] {b6 [End \textbf{B}]};

    \draw [arrow] (a1) -- (a2);
    \draw [arrow] (a2) -- (a3);
    \draw [arrow] (a3) -- (a4);
    \draw [line] (a2.east) -- ++(1,0);
    \draw [line] (a2.east)+(1,0) -- ++(1,1.5);
    \draw [arrow] (a2.east)+(1,1.5) -- (b1);
    \draw [arrow] (b1) -- (b2);
    \draw [arrow] (b2) -- (b3);
    \draw [arrow] (b2) -- (b6);
    \draw [arrow] (b3) -- (b4);
    \draw [line] (b4.east) -- ++(1,0);
    \draw [line] (b4.east)+(1,0) -- ++(1,4.5);
    \draw [arrow] (b4.east)+(1,4.5) -- (b1);
    \draw [arrow] (b4) -- (b5);
    \draw [arrow] (b5) -- (b6);
    \draw [line] (b6.west) -- ++(-1.5,0);
    \draw [line] (b6.west)+(-1.5,0) -- ++(-1.5,4.5);
    \draw [arrow] (b6.west)+(-1.5,4.5) -- (a3);
    \draw [line] let \p1 = ([xshift=1cm]b5.east), \p2 = (b6.east) in (\p2) -- (\x1,\y2) (\x1,\y2) -- (\p1) [arrow] (\p1) -- (b5.east);
\end{tikzpicture}
    \vspace{7pt}\caption{Example interprocedural program flow chart}
    \label{fig:interprocflow}
\end{figure}

Observe that the execution trace $\text{a1} \rightarrow \text{a2} \rightarrow \text{b1} \rightarrow \text{b2} \rightarrow \text{b6} \rightarrow \text{b5}$ is not possible. The flow of the program, however, does not indicate this (as there is, technically, an arrow from b6 to b5 since $\text{b6} \rightarrow \text{b5}$ can occur in some conditions).

Therefore, attempting to solve an \emph{inter}procedural analysis the same way we solve an intraprocedural analysis will cause issues (i.e. we will likely lose valid dataflow information due to joining information from invalid paths/flows). To solve this issue, we would naturally like to find only the \emph{valid} paths (the execution traces that are actually possible).

A straightforward approach is to find only the valid paths, and compute the \emph{MVP} (\emph{Meet over Valid Paths}). Unfortunately we see that, using the arguments that MOP is undecidable, MVP is undecidable. Thus, we look back to the MFP solution, and how we may be able to implement it for interprocedural analyses while only passing data through valid paths.

\subsubsection{Procedure Dataflows Are Messy}

Suppose two different points in the program call the same function. Then, we have some dataflow value $l_1$ and some other value $l_2$ entering the same function. These two values are from completely different contexts, so it is apparent that we should not join them together (though we could, but it will yield very muddled and not very useful information). To solve this, we will reform our dataflow property space to take context into account.

\subsubsection{Analyses Are Demanding}

Transferring information to the start of a method or returning information from a method back to its calling point may call for a special transfer method. Therefore, we ought to differentiate between call, return, and intraprocedural flows and transfer functions.

\subsection{Formalizing Interprocedural Dataflow Analyses}

We first ought to observe that context is important (as earlier motivated). We need to know the `call stack' to know where certain data flow information needs to return from inside a procedure. We also need to redefine our transfer functions to carry this information. Finally, we need our transfer functions to have the flexibility to transfer data flow information differently for different types of flows (i.e. between our normal flows, procedure-calling flows, return flows, etc.). We shall achieve these things in this section with a bit of extra machinery.

We first define a set $\Delta$ of `context values' of which our data flow information will carry. It is important to observe why this is necessary. Suppose a procedure is called at multiple points in the program. It is then evident that the data flow information throughout that procedure should be different when called at one point as opposed to another (as they will enter the procedure with differing data flow information). Therefore, we see that the data flow information is dependent on the \emph{context} from which the procedure is called. Also observe that keeping track of calling contexts allows us to solve the viable paths problem mentioned earlier.

Now, we define an \emph{Embellished Monotone Framework} $(\widehat{L}, S, F, \widehat{\lambda}, \widehat{\mathcal{F}})$ from a Monotone Framework $(L, S, F, \lambda, \mathcal{F})$ with the following new definitions:
\begin{myitemize}
    \item $\widehat{L} = \Delta \rightharpoonup L$
    \item $\widehat{\lambda} \in \widehat{L}$ is the initial data flow value
    \item $\widehat{\mathcal{F}}$ consists of the new transfer functions $\widehat{f_{\ell}} : \widehat{L} \rightarrow \widehat{L}$
\end{myitemize}

It is common to see $\widehat{L} = \Delta \rightarrow L$ in literature, however for the sake of consistency with implementation we instead use the set of partial functions $\Delta \rightharpoonup L$.

Our new initial data flow value may be intuitively defined as
\begin{equation*}
    \widehat{\lambda}(\delta) = \begin{cases}
        \lambda & \text{if } \delta = \Lambda\\
        \text{undefined} & \text{otherwise}
    \end{cases}
\end{equation*}
where $\Lambda \in \Delta$ is the context value which carries no context (empty). This makes good sense since there are (reasonably assumed to be) no `pending' procedure calls at the beginning of a program (nor at the end).

Now, let us visualize what is happening to data flow information when we call a procedure.

\tikzstyle{arrow} = [thick,->,>=stealth]
\tikzstyle{line} = [thick,-,>=stealth]
\tikzstyle{block} = [rectangle, rounded corners, minimum width=3cm, minimum height=1cm,text centered, draw=black]
\tikzstyle{text} = [rectangle, rounded corners, minimum width=3cm, minimum height=1cm,text centered, draw=none]
\begin{tikzpicture}[node distance=1.5cm]
    \node (proccall) [block] {\textbf{Procedure Call}};
    \node (lc) [above of=proccall, yshift=-0.75cm, xshift=1.5cm] {$\ell_c$};
    \node (lr) [above of=proccall, yshift=-2.25cm, xshift=1.5cm] {$\ell_r$};
    \node (procstart) [block, above of=proccall, right of=proccall, xshift=4cm, yshift=-0.5cm] {\textbf{Begin Procedure}};
    \node (procend) [block, below of=procstart, yshift=-0.5cm] {\textbf{End Procedure}};
    \node (ln) [right of=procstart, xshift=0.4cm] {$\ell_n$};
    \node (lx) [right of=procend, xshift=0.4cm] {$\ell_x$};

    \draw [->,
line join=round,
decorate, decoration={
    zigzag,
    segment length=7,
    amplitude=.9,post=lineto,
    post length=2pt
}]  (procstart) -- node[anchor=west] {$\widehat{l}'$} (procend);

    \draw [arrow] (proccall.east)+(0,0.25) -- node[anchor=south] {$\widehat{f_{\ell_c}^{\mathcal{C}}}(\widehat{l})$} (procstart);
    \draw [arrow] (procend.west) -- node[anchor=north] {$\widehat{f_{\ell_x,\ell_r}^{\mathcal{R}}}(\widehat{l}')$} ++(-2.5,0.75);
    \draw [arrow] ++(0,1.5) -- node[anchor=east,xshift=-0.1cm] {$\widehat{l}$} (proccall);
    
\end{tikzpicture}

For now, do not worry about these new functions, as they will be introduced soon. Essentially, we set up the function with whatever makes sense for the analysis. What is interesting in our following formalizations (and thus implementation) is that we provide a framework general enough to cover cases such as carrying over information from before the procedure to after it returns and directing dataflow information directly between the call and return points.

\subsubsection{The New Property Space: A Straightforward Approach}

Before we move on, there is still some work to be done. We still have yet to completely define $\widehat{L}$. We need to know the ordering on $\widehat{L}$, which begs the question, what does it mean for $\widehat{l} \in \widehat{L}$ to have ``more'' information than $\widehat{l}' \in \widehat{L}$? To answer this, let us take inspiration from the typical order on functions in $S \rightarrow M$ where $S$ is a non-empty set and $M$ is a (non-empty) lattice. The typical partial order is as follows:
\begin{equation*}
    f \sqsubseteq g \;\text{ iff }\; \forall s \in S,\; f(s) \sqsubseteq_M g(s)
\end{equation*}
That is, if at every point of $S$, $g$ gives at least as much information as $f$. Unfortunately, this is not very feasible for partial functions. To achieve a similar philosophy, we require containment of domains: for $f,g \in S \rightharpoonup M$
\begin{equation*}
    f \sqsubseteq_{S \rightharpoonup M} g \;\text{ iff }\; \dom{f} \subseteq \dom{g} \text{ and } \forall s \in \dom{f},\; f(s) \sqsubseteq_M g(s)
\end{equation*}
This relation is clearly a partial order, and the join operation follows as:
\begin{equation*}
    (f \sqcup g)(s) = {\uparrow}f(s) \sqcup_M {\uparrow}g(s)
\end{equation*}
defined \emph{only} on $\dom{f} \cup \dom{g}$, where
\begin{equation*}
    {\uparrow}f(s) = \begin{cases}
        f(s) & \text{if } s \in \dom{f}\\
        \bot_M & \text{otherwise}
    \end{cases}
\end{equation*}
is a total function in $S \rightarrow M$. The idea is that we join the information at each point, except for the points for which one function is undefined, in which case we assume the value of the other function at those points. For instance,
\begin{equation*}
    [x \mapsto m_1] \sqcup [x \mapsto m_2, y \mapsto m_3] = [x \mapsto m_1\sqcup_M m_2, y \mapsto m_3]
\end{equation*}
Also observe that we worked backwards. Since total functions are a subset of partial functions, it is actually the partial order of total functions that is induced from that of partial functions.
\begin{proposition}
    ${\uparrow}(f\sqcup g) = {\uparrow}f\sqcup {\uparrow}g$.
\end{proposition}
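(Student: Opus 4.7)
The plan is to show both sides are total functions on $S$ and then verify pointwise equality via a short case analysis on whether $s$ lies in $\dom{f} \cup \dom{g}$. First, I would observe that ${\uparrow}f$ and ${\uparrow}g$ are total by construction, so $\dom{{\uparrow}f} \cup \dom{{\uparrow}g} = S$, which means the partial-function join ${\uparrow}f \sqcup {\uparrow}g$ is itself already total. The left-hand side ${\uparrow}(f \sqcup g)$ is total by the definition of ${\uparrow}$. Hence it suffices to check agreement at each $s \in S$.

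Now I would fix an arbitrary $s \in S$. In the first case, $s \in \dom{f} \cup \dom{g} = \dom{f \sqcup g}$, so by the definition of the partial-function join, ${\uparrow}(f \sqcup g)(s) = (f \sqcup g)(s) = {\uparrow}f(s) \sqcup_M {\uparrow}g(s)$. On the other hand, $({\uparrow}f \sqcup {\uparrow}g)(s) = {\uparrow}({\uparrow}f)(s) \sqcup_M {\uparrow}({\uparrow}g)(s)$, and since ${\uparrow}$ is the identity on total functions, this is exactly ${\uparrow}f(s) \sqcup_M {\uparrow}g(s)$.

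In the second case, $s \notin \dom{f} \cup \dom{g}$, so $s \notin \dom{f \sqcup g}$, giving ${\uparrow}(f \sqcup g)(s) = \bot_M$. Simultaneously, $s \notin \dom{f}$ and $s \notin \dom{g}$ force ${\uparrow}f(s) = {\uparrow}g(s) = \bot_M$, so $({\uparrow}f \sqcup {\uparrow}g)(s) = \bot_M \sqcup_M \bot_M = \bot_M$. Both sides equal $\bot_M$, and the equality holds.

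There is no real obstacle here; the argument is a bookkeeping exercise. The only subtlety worth flagging is the mild overloading of $\sqcup$: on the right-hand side it denotes the partial-function join applied to objects that happen to be total, and one must use idempotence of ${\uparrow}$ on total functions to collapse the notation. Once that is observed, the case analysis is immediate.
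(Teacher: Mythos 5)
Your proof is correct and is exactly the ``straightforward from definition'' argument the paper omits and leaves to the reader: a pointwise case analysis on $s \in \dom{f}\cup\dom{g}$, using idempotence of ${\uparrow}$ on total functions to collapse the right-hand side. Nothing is missing.
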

\begin{proof}
    Straightforward from definition. Left for the reader.
\end{proof}
\begin{lemma}
    $f\sqcup g$ is the least upper bound of $f,g \in S \rightharpoonup M$.
\end{lemma}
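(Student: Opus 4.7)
The plan is to verify the two defining properties of a least upper bound: first that $f \sqcup g$ is an upper bound of $\{f,g\}$, and second that it sits below any other upper bound.

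For the upper bound claim, I would observe that $\dom{f} \subseteq \dom{f} \cup \dom{g} = \dom{f \sqcup g}$ and then, for each $s \in \dom{f}$, compute $(f \sqcup g)(s) = {\uparrow}f(s) \sqcup_M {\uparrow}g(s) = f(s) \sqcup_M {\uparrow}g(s) \sqsupseteq_M f(s)$, using only the property $x \sqsubseteq_M x \sqcup_M y$. This gives $f \sqsubseteq f \sqcup g$, and symmetry handles $g$.

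For the least-bound claim, fix any upper bound $h$ of $\{f,g\}$. The domain requirement $\dom{f \sqcup g} \subseteq \dom{h}$ is immediate from $\dom{f}, \dom{g} \subseteq \dom{h}$. For the value comparison I would lift to total functions: the hypothesis $f \sqsubseteq h$ gives ${\uparrow}f(s) \sqsubseteq_M {\uparrow}h(s)$ for every $s \in S$ (by assumption on $\dom{f}$, and by $\bot_M \sqsubseteq_M {\uparrow}h(s)$ outside it), and similarly for $g$. Since pointwise join is the least upper bound in the total-function lattice $S \to M$, we get ${\uparrow}f \sqcup {\uparrow}g \sqsubseteq {\uparrow}h$ pointwise. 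Invoking the preceding Proposition rewrites this as ${\uparrow}(f \sqcup g) \sqsubseteq {\uparrow}h$. Finally, for $s \in \dom{f \sqcup g}$ both $f \sqcup g$ and $h$ are defined, so ${\uparrow}$ agrees with the underlying partial functions on both sides, yielding $(f \sqcup g)(s) \sqsubseteq_M h(s)$ as required.

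I do not anticipate any serious obstacle; the lemma essentially packages the familiar fact that pointwise join is the LUB for total functions. The only technical care required is in tracking where $\bot_M$ enters via the ${\uparrow}$ construction on the complement of a domain, but leveraging the Proposition to push the pointwise step into the total-function setting handles this in one stroke.
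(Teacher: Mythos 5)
Your proof is correct, but it routes the key step differently from the paper. The paper proves only the ``least'' half (taking the upper-bound half as given) and does so by a direct three-way case split on $s \in \dom{f} \cap \dom{g}$, $s \in \dom{f}\setminus\dom{g}$, and $s \in \dom{g}\setminus\dom{f}$, verifying ${\uparrow}f(s) \sqcup_M {\uparrow}g(s) \sqsubseteq_M h(s)$ in each case. You instead lift everything to the total-function lattice $S \rightarrow M$, use the standard fact that pointwise join is the least upper bound there, and then invoke Proposition 4.1 (${\uparrow}(f\sqcup g) = {\uparrow}f \sqcup {\uparrow}g$) to transport the conclusion back down; the paper's proof never touches that proposition. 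Your route is arguably cleaner since it replaces the case analysis with one appeal to a known lattice fact, though it does lean on Proposition 4.1, whose proof the paper leaves to the reader, so the case analysis has not truly disappeared --- it has been relocated. You also explicitly verify that $f \sqcup g$ is an upper bound of $\{f,g\}$, a step the paper silently omits; including it makes the argument more complete. One small point of care: your final restriction step implicitly uses $\dom{f\sqcup g} \subseteq \dom{h}$ to conclude ${\uparrow}h(s) = h(s)$ there, which you do establish, so the argument closes correctly.
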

\begin{proof}
    Suppose that $h \in S \rightharpoonup M$ is an upper bound of $f,g \in S \rightharpoonup M$. Then, $\dom{f} \subseteq \dom{h}$ and $\dom{g} \subseteq \dom{h}$, so $\dom{f\sqcup g} = \dom{f}\cup\dom{g} \subseteq \dom{h}$. Also,
    \begin{align*}
        \forall s \in \dom{f},\; f(s) &\sqsubseteq_M h(s)\\
        \forall s \in \dom{g},\; g(s) &\sqsubseteq_M h(s)
    \end{align*}
    We need to prove
    \begin{equation*}
        \forall s \in \dom{f}\cup\dom{g},\; {\uparrow}f(s)\sqcup_M {\uparrow}g(s) \sqsubseteq_M h(s)
    \end{equation*}
    Suppose $s \in \dom{f} \cap \dom{g}$. Then,
    \begin{equation*}
        {\uparrow}f(s)\sqcup_M {\uparrow}g(s) = f(s)\sqcup_M g(s) \sqsubseteq_M h(s)
    \end{equation*}
    since $h(s)$ is an upper bound of $f(s)$ and $g(s)$. Now suppose $s \in \dom{f} \setminus \dom{g}$. Then,
    \begin{equation*}
        {\uparrow}f(s)\sqcup_M {\uparrow}g(s) = f(s)\sqcup_M \bot_M \sqsubseteq_M h(s)
    \end{equation*}
    The case for $s \in \dom{g}\setminus\dom{f}$ is symmetric.
\end{proof}

\begin{corollary}
    $S \rightharpoonup M$ is a (complete) lattice.
\end{corollary}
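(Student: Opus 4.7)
The plan is to extend the binary-join construction of the preceding lemma to arbitrary subsets, and then construct arbitrary meets dually. Since the definition of a complete lattice requires both least upper bounds and greatest lower bounds of every subset (including the empty one), I would handle joins first and then argue the meets by a parallel but slightly different construction.

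For an arbitrary $\mathcal{P} \subseteq (S \rightharpoonup M)$, I would define $\bigsqcup \mathcal{P}$ on the domain $\bigcup_{f \in \mathcal{P}} \dom{f}$ by
\begin{equation*}
    \Bigl(\bigsqcup \mathcal{P}\Bigr)(s) \;=\; \bigsqcup\nolimits_M \{\, f(s) \,:\, f \in \mathcal{P},\; s \in \dom{f}\,\},
\end{equation*}
which is well-defined because $M$ is a complete lattice. Equivalently, one can write $(\bigsqcup\mathcal{P})(s) = \bigsqcup_M \{{\uparrow}f(s) : f \in \mathcal{P}\}$ on this domain, generalising the two-function formula. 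For any fixed $g \in \mathcal{P}$, clearly $\dom{g} \subseteq \dom{\bigsqcup \mathcal{P}}$ and $g(s) \sqsubseteq_M (\bigsqcup\mathcal{P})(s)$ on $\dom{g}$, so $g \sqsubseteq \bigsqcup \mathcal{P}$. For minimality, if $h$ is any upper bound, then $\dom{f} \subseteq \dom{h}$ for every $f \in \mathcal{P}$ so $\dom{\bigsqcup\mathcal{P}} \subseteq \dom{h}$, and pointwise $h(s)$ dominates each $f(s)$ in $M$, hence dominates their join. When $\mathcal{P} = \emptyset$ this construction yields the nowhere-defined function, which serves as $\bot$ of $S \rightharpoonup M$ (its domain is empty, and so is trivially contained in every other domain).

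For greatest lower bounds, define $\bigsqcap \mathcal{P}$ on the domain $\bigcap_{f \in \mathcal{P}} \dom{f}$ by
\begin{equation*}
    \Bigl(\bigsqcap \mathcal{P}\Bigr)(s) \;=\; \bigsqcap\nolimits_M \{\, f(s) \,:\, f \in \mathcal{P}\,\}.
\end{equation*}
This is the biggest common sub-definition: we can only hope to be below every $f \in \mathcal{P}$ on points where they are \emph{all} defined. A symmetric check shows $\bigsqcap \mathcal{P} \sqsubseteq g$ for every $g \in \mathcal{P}$, and that any lower bound $h$ satisfies $\dom{h} \subseteq \bigcap_f \dom{f}$ and $h(s) \sqsubseteq_M f(s)$ for every $f$, hence $h(s) \sqsubseteq_M (\bigsqcap\mathcal{P})(s)$. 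For $\mathcal{P} = \emptyset$, the conventional reading $\bigcap_{f \in \emptyset} \dom{f} = S$ together with $\bigsqcap_M \emptyset = \top_M$ produces the total constant function $s \mapsto \top_M$, which is indeed the $\top$ of $S \rightharpoonup M$ (largest possible domain, largest possible value everywhere).

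The main obstacle I expect is not the non-empty case — that is essentially a bookkeeping generalisation of the preceding lemma — but handling the empty set cleanly on both sides, since $\bot$ and $\top$ must genuinely lie in $S \rightharpoonup M$ and must dominate/be dominated by arbitrary partial functions in the right direction. Once one commits to the convention that the empty intersection of subsets of $S$ is $S$ and the empty meet in $M$ is $\top_M$, everything lines up and the corollary drops out; I would just make sure to flag these two conventions explicitly rather than leave them implicit.
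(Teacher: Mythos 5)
Your construction is the same as the paper's: the join of $\mathcal{P}$ is defined on $\bigcup_{f\in\mathcal{P}}\dom{f}$ by taking pointwise joins in $M$ (equivalently, joins of the lifted values ${\uparrow}f(s)$), and the meet on $\bigcap_{f\in\mathcal{P}}\dom{f}$ by pointwise meets. You additionally write out the leastness of the join, the greatest-lower-bound verification, and the empty-set conventions giving $\bot$ (the nowhere-defined function) and $\top$ (the constant-$\top_M$ total function) --- all of which the paper explicitly leaves to the reader --- and these details are correct.
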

\begin{proof}
    Let $P \subseteq S \rightharpoonup M$, and define $D = \bigcup_{f \in P}\dom{f}$. Then, let
    \begin{equation*}
        p = \bigsqcup_{f\in P} f
    \end{equation*}
    We show $p$ exists in $S \rightharpoonup M$ by simple construction. First, observe that $D \subseteq S$. Then, $\forall d \in D$,
    \begin{equation*}
        p(d) = \bigsqcup_{f \in P} {\uparrow}f(s)
    \end{equation*}
    exists since $M$ is a lattice (and thus all its subsets have a least upper bound).

    Let $f \in P$ (if $P$ is empty, the following is satisfied vacuously). Then, we clearly have that $f \sqsubseteq p$ since $\dom{f} \subseteq D$ and $s \in \dom{f} \implies f(s) \sqsubseteq_M p(d)$. It is left for the reader to show that $p$ is the \emph{least} upper bound.

    It is also left for the reader to show Lemma 4.1 for $(f \sqcap g)(s) = f(s) \sqcap_M g(s)$ (defined over $\dom{f} \cap \dom{g}$) and complete this corollary.
\end{proof}
We have now proven the lattice requirement, but we still must prove one more condition to make this lattice usable (as was mentioned earlier in the paper).
\begin{lemma}
    Suppose $M$ satisfies the Ascending Chain Condition. Then, $S \rightharpoonup M$ satisfies the Ascending Chain Condition if and only if $S$ is finite.
\end{lemma}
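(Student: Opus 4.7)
The plan is to prove both directions, using contraposition for the forward direction and a direct argument for the backward direction. The backward direction is the more substantive part, relying on two distinct uses of finiteness.

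For the forward direction ($\Rightarrow$), I would prove the contrapositive: if $S$ is infinite, then $S \rightharpoonup M$ fails the Ascending Chain Condition. Pick countably many distinct elements $s_1, s_2, \ldots \in S$ and define $f_n : \{s_1, \ldots, s_n\} \to M$ by $f_n(s_i) = \bot_M$. The sequence $(f_n)$ is an ascending chain in $S \rightharpoonup M$ because $\dom{f_n} \subsetneq \dom{f_{n+1}}$ and the values agree on the common domain; the chain is strictly ascending since the domains properly increase at every step, so it cannot stabilize. Note that this argument makes no use of nontrivial structure in $M$---only that $\bot_M$ exists.

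For the backward direction ($\Leftarrow$), suppose $S$ is finite and let $f_1 \sqsubseteq f_2 \sqsubseteq \cdots$ be any ascending chain in $S \rightharpoonup M$. First I would observe that the sequence of domains $\dom{f_1} \subseteq \dom{f_2} \subseteq \cdots$ is a weakly ascending chain of subsets of the finite set $S$; since $S$ has only finitely many subsets, there exists $N$ such that $\dom{f_n} = \dom{f_N} =: D$ for all $n \geq N$. Next, for each fixed $s \in D$, the sequence $(f_n(s))_{n \geq N}$ is an ascending chain in $M$ (by the pointwise condition in the definition of $\sqsubseteq_{S \rightharpoonup M}$), which stabilizes by ACC in $M$ at some index $N_s \geq N$. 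Since $D$ is finite, $N^\star = \max_{s \in D} N_s$ exists; then for all $n \geq N^\star$, $f_n$ and $f_{N^\star}$ have the same domain $D$ and agree on every element of $D$, so $f_n = f_{N^\star}$.

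The main obstacle is conceptual rather than technical: one must notice that finiteness of $S$ is used twice in the backward direction---once to ensure the domain chain stabilizes, and once to take a maximum over finitely many pointwise stabilization indices. Separating the argument into a ``domain stabilizing'' phase followed by a ``pointwise values stabilizing'' phase makes both uses of finiteness explicit, and keeps the argument clean. The forward direction is then easy once one sees that a strictly growing domain sequence alone already yields a non-stabilizing chain, independent of $M$.
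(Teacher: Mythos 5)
Your proof is correct and follows essentially the same route as the paper: the forward direction via the contrapositive, building a chain whose domains strictly grow, and the backward direction by combining finiteness of $S$ (so domains can only grow finitely often) with the ACC of $M$ (so values can only increase finitely often). Your two-phase organization of the backward direction --- first stabilize the domains, then take a maximum over the finitely many pointwise stabilization indices --- is a more carefully arranged version of the paper's informal count of how many times the chain can strictly increase.
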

\begin{proof}
    We begin in the backward direction: assume $S$ is finite. Let $(f_i)_{i \in \mathcal{I}}$ be a chain in $S \rightharpoonup M$ and pick an arbitrary $i,i' \in \mathcal{I}$ with $f_i \sqsubseteq f_{i+1}$. This implies $\dom{f_i} \subseteq \dom{f_{i'}}$ and
    \begin{equation*}
        \forall s \in \dom{f_i},\; f_i(s) \sqsubseteq_M f_{i'}(s)
    \end{equation*}
    Assume $f_i \neq f_{i'}$. Then there are two (not mutually exclusive) cases. The first case is that $\dom{f_i} \subsetneq \dom{f_{i'}}$. Since $S$ is finite, this can only happen finitely many times. The second case is that $\exists s \in \dom{f_i}$ such that $f_i(s)$ is strictly less than $f_{i'}(s)$. Since $M$ satisfies the Ascending Chain Condition, this can only happen finitely many times. Thus, the chain can only grow finitely many times before stabilizing, and so $S \rightharpoonup M$ satisfies the Ascending Chain Condition.
    \\\\Now we prove the forward direction by its contrapositive: assume $S$ is not finite. Then, let $(f_n)_{n \in \mathbb{N}}$ be a chain such that $\dom{f_1}$ is finite and $\dom{f_{i+1}} = \dom{f_i} \cup \{s\}$ with $s \notin \dom{f_i}$. This chain clearly does not stabilize, and thus $S \rightharpoonup M$ does not satisfy the Ascending Chain Condition. By the contrapositive, this proves the forward direction.
\end{proof}
We would now like to prove that $\widehat{L}$ satisfies the Ascending Chain Condition. In order to achieve this, however, we need to make a key assumption: call stacks are bounded. This ought to be the case in programming languages regardless to avoid stack overflows, however we impose this requirement on $\Delta$ to ensure that call contexts can not be arbitrarily long (which could result from recursion). Under this assumption, we get the following result:
\begin{corollary}
    $\widehat{L}$ satisfies the Ascending Chain Condition.
\end{corollary}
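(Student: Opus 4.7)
The plan is to reduce the claim directly to Lemma 4.2 (the characterization of when $S \rightharpoonup M$ satisfies the Ascending Chain Condition). Since $\widehat{L}$ is defined as $\Delta \rightharpoonup L$, and $L$ satisfies the Ascending Chain Condition by our standing requirement on any Monotone Framework, the lemma tells us that $\widehat{L}$ satisfies the Ascending Chain Condition if and only if $\Delta$ is finite. So the whole proof boils down to showing that $\Delta$ is finite.

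Next, I would unpack what $\Delta$ is: it is the set of possible calling contexts, each of which is essentially a stack of call sites. The number of syntactic call sites appearing in a given program is finite (our standing assumption about labeled programs is that $\operatorname{blocks}(P)$ is finite, and call points are a subclass of blocks). Combined with the key assumption just stated in the excerpt, namely that call stacks are bounded in length by some $k \in \mathbb{N}$, this means that $\Delta$ is a subset of $\bigcup_{i=0}^{k} C^{i}$, where $C$ is the finite set of call sites. A finite union of finite Cartesian powers of a finite set is finite, so $\Delta$ is finite.

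With $\Delta$ finite and $L$ satisfying the Ascending Chain Condition, Lemma 4.2 applies and yields that $\widehat{L} = \Delta \rightharpoonup L$ satisfies the Ascending Chain Condition, which is the desired conclusion.

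The main obstacle, such as it is, lies in making precise what $\Delta$ actually looks like: the paper has so far treated $\Delta$ quite abstractly, only asserting informally that its elements encode call-stack information and that stacks are bounded. The bulk of the argument is thus definitional/structural rather than technical, and once $\Delta$ is pinned down as a bounded-length sequence of elements from a finite alphabet of call sites, the rest of the argument is immediate from Lemma 4.2.
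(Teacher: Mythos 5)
Your proof is correct and takes essentially the same route as the paper: both reduce the claim to Lemma 4.2 by noting that $\Delta$ is finite (under the bounded-call-stack assumption) and that $L$ satisfies the Ascending Chain Condition. The only difference is that you spell out why bounded-length stacks over a finite set of call sites make $\Delta$ finite, a detail the paper leaves implicit in its one-line appeal to the lemma.
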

\begin{proof}
    By Lemma 4.2, since $\Delta$ is finite and $L$ satisfies the Ascending Chain Condition, we immediately have that $\widehat{L}$ satisfies the Ascending Chain Condition.
\end{proof}
While this new property space may work fine for our purposes, the Hal implementation instead uses the approach outlined in Section 5.1. In Section 5.1.1, the approaches are compared, and it is highly encouraged for the reader to take a look at these sections.

\subsubsection{Interprocedural Transfer Functions}

Now, we would like to give the user more flexibility with their analysis (with respect to the calling/returning semantics). Since we are abstracting the semantics of the language that is being evaluated (for the user's discretion), we want to cover cases where a programming language's semantics may involve wacky things while calling or returning from a function. Therefore, we will make one more slight reformation to our Embellished Monotone Framework. We now define
\begin{equation*}
    \widehat{\mathcal{F}} = (\widehat{\mathcal{F_N}}, \widehat{\mathcal{F_C}}, \widehat{\mathcal{F_R}})
\end{equation*}
where $\widehat{\mathcal{F_N}}$ is the set of \emph{normal} functions $\widehat{f_{\ell}^{\mathcal{N}}} : \widehat{L} \rightarrow \widehat{L}$, which are used for intraprocedural flows, $\widehat{\mathcal{F_C}}$ is the set of \emph{call} functions $\widehat{f_{\ell}^{\mathcal{C}}} : \widehat{L} \rightarrow \widehat{L}$, which are used for interprocedural function-calling flows, and $\widehat{\mathcal{F_R}}$ is the set of \emph{return} functions $\widehat{f_{\ell,\ell'}^{\mathcal{R}}} : \widehat{L} \rightarrow \widehat{L}$, which are used for interprocedural function-returning flows. Observe that return functions are different because it is important to know where we are returning to ($\ell'$). Otherwise, we would be moving information from contexts that have no relevance to $\ell'$. Fortunately, this ``screening'' of dataflow information can be achieved entirely through inference, which we will see in the following section.

Observe that $\widehat{f_{\ell_c}^{\mathcal{C}}}$ and $\widehat{f_{\ell_x,\ell_r}^{\mathcal{R}}}$ only need to be defined for program points in interprocedural flows (and similarly for $\widehat{\mathcal{F_N}}$ over intraprocedural flows). Nevertheless, this is typically generalizable (as will be discussed with the $\operatorname{transfer}$ function in the following section), making it much easier on the user and the developer during implementation.

Now we need a mechanism to allow us to know which functions should be used during evaluation. There are many ways to formalize this mathematically. For instance, \cite{popa} defines an interprocedural flow structure consisting of all such 4-tuples $(\ell_c,\ell_n,\ell_x,\ell_r)$ where $(\ell_c,\ell_n),(\ell_x,\ell_r) \in F$ and $\ell_n$ is the start of a function and $\ell_x$ is the end of a function, and $\ell_c,\ell_r$ are the call and return points (respectively) for the same block.

For a formalization better representing the implementation, we could instead choose to reform $F$ to a set $\widetilde{F}$ of \emph{tagged} flows. Each flow is ``tagged'' with its flow type (either normal, calling, or returning, corresponding to $\mathcal{N,C,R}$ respectively). Working with this mathematically is a little bit messy, so for the remainder of this paper we will instead choose to introduce a function which serves the same purpose. We will define $\Gamma : F \rightarrow \{\mathcal{N,C,R}\}$ as the function which ``tags'' every flow. We will also feel free to abuse notation by saying $\widehat{f_{\ell}^{\Gamma(\ell,\ell')}}$ as the function corresponding to the type of flow we are dealing with.

Observe that our formalization allows the freedom for function-calling blocks to still have normal flows. For instance, if we would like the call point of a block to flow data to its return point (for instance, if the data flow information before calling the function is of importance), that may be achieved through its normal function (while its call function may be different).

Finally, we end this portion by reiterating the monotonicity requirement on $\widehat{\mathcal{F}}$ as we did on $\mathcal{F}$.

\subsubsection{Putting It All Together}

We have finally done all that needs to be done. Sewing these pieces together leaves us with $(\widehat{L}, S, F, \Gamma, \widehat{\lambda}, \widehat{\mathcal{F}})$ as our new and improved Embellished Monotone Framework. Observe that we added $\Gamma$ for the sake of completion mathematically, however in implementation (as previously stated), each flow is attached with a tag inherently (which the user defines), so $\Gamma$ is implicitly defined/inferred from the user's specification of $\widetilde{F}$.

It also should be specified that the user will not need to individually specify transfer functions for each program point (indeed, that would defeat our whole purpose, it would then only work for a very small set of programs!). Instead, the user only needs to specify one function
\begin{equation*}
    \operatorname{transfer} : \mathbf{Block} \times \{\mathcal{N,C,R}\} \times L \rightarrow L
\end{equation*}
where $\operatorname{transfer}(\llbracket B\rrbracket_{\ell}, \Gamma(\ell,\ell'), l)$ corresponds to $\widehat{f_{\ell}^{\Gamma(\ell,\ell')}}(\delta, l)$ for some context $\delta$. We will soon address inference of contexts.

The users also have little work to do with regards to labeling. They need not specify the call and return labels. All that needs to be done is that a well-formed labeling is defined for the program. Then, for any $(\ell,\ell') \in F$ such that $\Gamma(\ell,\ell') = \mathcal{C}$, we may create a call label $\ell_c$ and return label $\ell_r$ for $\ell$ (with the call and return labels effectively replacing $\ell$). We shall feel free to stick to the subscript $c$ and $r$ to refer to a given labels call and return labels respectively given that it has call and return labels.

That note suggests the need to impose a well-formedness restriction on $\Gamma$ (or in the implementation's case, $\widetilde{F}$). That is, it must be the case that for any $(\ell,\ell') \in F$ such that $\Gamma(\ell,\ell') = \mathcal{C}$, there exists $\ell''$ such that $\Gamma(\ell'', \ell) = \mathcal{R}$. That is, every function call must eventually return back to where it was called from. With this imposition, we may safely assume that any label which calls a function or is returned to from a function will have a $\ell_c$ and $\ell_r$.

To work with call and return labels more rigorously, one would need to define functions for moving forward and backward between the user-defined labeling and the new labeling with call and return labels. We will spare ourselves of the precise details, however the idea is this: Given a program $P$ and a $\mathcal{L}$-labeling over $P$, we construct a new $\widehat{\mathcal{L}}$-labeling over $P$ where each $\ell \in \mathcal{L}$ is replaced by two new labels $\ell_c,\ell_r$ (the respective call and return labels) only if $\Gamma(\ell,\ell') = \mathcal{C}$ for some $\ell' \in \mathcal{L}$. Now, we construct a map $\rho : \widehat{\mathcal{L}} \rightarrow \mathcal{L}$ which goes ``backwards'' in the sense that $\rho(\ell_c) = \rho(\ell_r) = \ell$ (using the aforementioned $\ell,\ell_c,\ell_r$).

As for contexts, the user need not mess with contexts at all, as our machinery allows us to infer everything necessary. The inference is as follows:
\begin{align*}
    \widehat{f_{\ell}^{\mathcal{N}}}(\widehat{l})(\delta) &=  \operatorname{transfer}(\llbracket B\rrbracket_{\ell}, \mathcal{N}, \widehat{l}(\delta))\\
    \widehat{f_{\ell}^{\mathcal{C}}}(\widehat{l})(\delta) &= \begin{cases}
        \operatorname{transfer}(\llbracket B\rrbracket_{\ell}, \mathcal{C}, \widehat{l}(\delta')) & \text{if } \delta = \delta';\ell_c\\
        \text{undefined} & \text{otherwise}
    \end{cases}\\
    \widehat{f_{\ell_x,\ell_r}^{\mathcal{R}}}(\widehat{l})(\delta) &= \operatorname{transfer}(\llbracket B\rrbracket_{\ell}, \mathcal{R}, \widehat{l}(\delta;\ell_c)) \quad \text{where $\llbracket B\rrbracket_{\rho(\ell_c)} = \llbracket B\rrbracket_{\rho(\ell_r)}$}
\end{align*}
In the first case, we simply take the data flow information and transform it directly according to the user's specification without changing the call context. In the second case, we first transform the data flow information (according to the user's specification), and then push the corresponding call label to the call context. In the last case, we do the opposite: we pop the corresponding call label from the call context and then transform. The condition that $\llbracket B\rrbracket_{\rho(\ell_c)} = \llbracket B\rrbracket_{\rho(\ell_r)}$ simply means that $\ell_c$ is the call point for the block that we are returning to (point $\ell_r$).

Observe that the user-defined function ``$\operatorname{transfer}$'' must be monotone with respect to the last parameter in order to maintain the monotonicity requirement of $\widehat{\mathcal{F}}$. Fortunately, this is relatively natural as an increase in the amount of information in the input should lead to an increase in the amount of information in the output.

We now see that from $L,\lambda,\operatorname{transfer}$, we may deduce $\widehat{L},\widehat{\lambda},\widehat{\mathcal{F}}$. Thus, from an \emph{Implicit Monotone Framework} $(L, S, \widetilde{F}, \lambda, \operatorname{transfer})$, we may infer the Embellished Monotone Framework $(\widehat{L}, S, F, \Gamma, \widehat{\lambda}, \widehat{\mathcal{F}})$ required to conduct the analysis.

Another inference we may make from the analyzer which we briefly outlined earlier is program labels. The user only need define a well-formed labeling, and the analyzer can then infer all function-calling labels with $\widetilde{F}$. Then, the analyzer may split up such a label $\ell$ into its call and return labels $\ell_c,\ell_r$ and adjust $\widetilde{F}$ accordingly.

\subsection{Solving Interprocedural Analysis}

Now that we have made clear what the user ought to define and how we may infer from it an Embellished Monotone Framework, we may proceed to computing the analysis. As earlier motivated, a MOP solution is undecidable, as is a MVP solution. Therefore, we should feel safer using a MFP solution as was done for intraprocedural analyses. With the introduction of ``tagged'' flows, we will require some changes to the MFP solution presented in \cite{popa}.

\subsubsection{The Mathematical Solution}

Before we attempt to solve this algorithmically, however, we ought to understand what the mathematical solution is so we can formally prove the correctness of our algorithm. \cite{popa} provides a mathematical solution for their interprocedural analyses, however our framework differs.

For interprocedural analyses, the entry values are analogous to those in \cite{popa}:
\begin{equation*}
    \operatorname{DF}_{\circ}(\ell) = \widehat{\lambda}_E^{\ell} \;\sqcup\; \bigsqcup_{(\ell',\ell) \in F} \operatorname{DF}_{\bullet}(\ell')(\ell)
\end{equation*}
where
\begin{equation*}
    \widehat{\lambda}_S^{\ell} = \begin{cases}
        \widehat{\lambda} & \text{if } \ell \in S\\
        \bot & \text{otherwise}
    \end{cases}
\end{equation*}
Where we differ from \cite{popa} is in the exit values. Since we ``split'' flows and their respective transfer functions into different kinds, labels may have multiple outputs. \cite{popa} achieves the same ability through a generalization which serves to ``skip'' through functions by carrying dataflow values across through function calls (directly to the return point). Our exit values are then mathematically determined by:
\begin{equation*}
    \operatorname{DF}_{\bullet}(\ell)(\ell') = f_{\ell}^{\Gamma(\ell,\ell')}(\operatorname{DF}_{\circ}(\ell))
\end{equation*}
for all $\ell'$ such that $(\ell,\ell') \in F$ (and not defined otherwise).

\subsubsection{The MFP Algorithm}

In our modified algorithm, we take as input an Embellished Monotone Framework $(\widehat{L}, S, F, \Gamma, \widehat{\lambda}, \widehat{\mathcal{F}})$ (which is inferred from an Implicit Monotone Framework), and provide as output the entry/exit values $\mathrm{MFP}_{\circ},\mathrm{MFP}_{\bullet}$ for all program points. Furthermore, $\mathrm{MFP}_{\circ},\mathrm{MFP}_{\bullet}$ is precisely $\mathrm{DF}_{\circ},\mathrm{DF}_{\bullet}$.

The first step of our algorithm, Fig. \ref{fig:mfp1}, follows similarly from \cite{popa}.
\begin{figure}[h]
    \centering
    \begin{SpecialText}
    \begin{algorithmic}
\State $W \gets \mathbf{nil}$
\For{\textbf{all} $(\ell,\ell') \in F$}
    \State $W \gets \operatorname{cons}((\ell,\ell'),W)$
\EndFor
\For{\textbf{all} $\ell$ \textbf{in} $F$ \textbf{or} $S$}
    \If{$\ell \in S$}
        \State $\mathrm{Analysis}[\ell] \gets \widehat{\lambda}$
    \Else
        \State $\mathrm{Analysis}[\ell] \gets \bot_{\widehat{L}}$
    \EndIf
\EndFor
\end{algorithmic}
\end{SpecialText}
    \caption{First step of the algorithm}
    \label{fig:mfp1}
\end{figure}

Now that we have initialized $W$ (the worklist) and $\mathrm{Analysis}$, we can begin updating. This step is to iterate through and ``collect'' dataflow values until we reach an equilibrium (fixed point).
\begin{figure}[h]
    \centering
    \begin{SpecialText}
\begin{algorithmic}
\While{$W \neq \mathbf{nil}$}
    \State $(\ell,\ell') \gets \operatorname{head}(W)$
    \State $W \gets \operatorname{tail}(W)$
    \If{$f_{\ell}^{\Gamma(\ell,\ell')}(\mathrm{Analysis}[\ell]) \not\sqsubseteq \mathrm{Analysis}[\ell']$}
        \State $\mathrm{Analysis}[\ell'] \gets \mathrm{Analysis}[\ell'] \sqcup f_{\ell}^{\Gamma(\ell,\ell')}(\mathrm{Analysis}[\ell])$
        \For{\textbf{all} $\ell''$ \textbf{with} $(\ell',\ell'') \in F$}
            \State $W \gets \operatorname{cons}((\ell',\ell''),W)$
        \EndFor
    \EndIf
\EndWhile
\end{algorithmic}
\end{SpecialText}
    \caption{Second step of the algorithm}
    \label{fig:mfp2}
\end{figure}

After the second step, Fig. \ref{fig:mfp2}, we have computed the entry values for each program point ($\mathrm{Analysis}$). All that is left is to apply the transfer functions to each of these to attain the exit values.

Here, our output for exit values differs considerably from \cite{popa}. Since a label can certainly send different values through different flows (i.e. calling multiple functions or having multiple flows with different types), we instead assign to each label $\ell$ a partial function mapping each $\ell'$ such that $(\ell,\ell') \in F$ to the corresponding value it receives from $\ell$ (and we use a placeholder label $end$ representing the end of the program). This is done in Fig. \ref{fig:mfp3}.
\begin{figure}[h]
    \centering
    \begin{SpecialText}
\begin{algorithmic}
\For{\textbf{all} $\ell$ \textbf{in} $F$ \textbf{or} $S$}
    \State $\mathrm{MFP}_{\circ}(\ell) \gets \mathrm{Analysis}[\ell]$
    \State $\mathrm{Outputs}_{\ell} \gets []$
    \If{$\nexists \ell'$ \textbf{with} $(\ell,\ell') \in F$}
        \State $\mathrm{Outputs}_{\ell}[end] \gets f_{\ell}^{\mathcal{N}}(\mathrm{Analysis}[\ell])$
    \Else
        \For{\textbf{all} $\ell'$ \textbf{with} $(\ell,\ell') \in F$}
            \State $\mathrm{Outputs}_{\ell}[\ell'] \gets f_{\ell}^{\Gamma(\ell,\ell')}(\mathrm{Analysis}[\ell])$
        \EndFor
    \EndIf
    \State $\mathrm{MFP}_{\bullet}(\ell) \gets \mathrm{Outputs}_{\ell}$
\EndFor
\end{algorithmic}
\end{SpecialText}
    \caption{Third step of the algorithm}
    \label{fig:mfp3}
\end{figure}

And now, for the most daunting task: the correctness proof.
\begin{theorem}
    The above algorithm terminates, and upon termination satisfies $\mathrm{MFP}_{\circ} = \operatorname{DF}_{\circ}$.
\end{theorem}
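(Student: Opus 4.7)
The plan is to decompose the theorem into termination and two opposite inequalities, all argued via loop invariants on the worklist iteration. Termination follows directly from ACC on $\widehat{L}$ (Corollary 4.2). The algorithm mutates $\mathrm{Analysis}[\ell']$ only when $f_\ell^{\Gamma(\ell,\ell')}(\mathrm{Analysis}[\ell]) \not\sqsubseteq \mathrm{Analysis}[\ell']$, and the new value $\mathrm{Analysis}[\ell'] \sqcup f_\ell^{\Gamma(\ell,\ell')}(\mathrm{Analysis}[\ell])$ is then strictly greater in $\widehat{L}$. By ACC, each entry $\mathrm{Analysis}[\ell]$ can increase strictly only finitely often; worklist items are pushed only upon such an increase and there are only finitely many flows out of each label, so $W$ empties after finitely many iterations.

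For the upper direction $\mathrm{Analysis}[\ell] \sqsubseteq \mathrm{DF}_\circ(\ell)$, I would carry this as a loop invariant. It holds at initialization since $\mathrm{Analysis}[\ell] \in \{\widehat{\lambda}, \bot\}$ matches $\widehat{\lambda}_S^\ell$, which is dominated by $\mathrm{DF}_\circ(\ell)$. For the inductive step, the update of $\mathrm{Analysis}[\ell']$ joins in $f_\ell^{\Gamma(\ell,\ell')}(\mathrm{Analysis}[\ell])$, which by monotonicity of the transfer function and the hypothesis is $\sqsubseteq f_\ell^{\Gamma(\ell,\ell')}(\mathrm{DF}_\circ(\ell)) = \mathrm{DF}_\bullet(\ell)(\ell')$, i.e. one of the joinands in the defining equation for $\mathrm{DF}_\circ(\ell')$. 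Combined with $\mathrm{Analysis}[\ell'] \sqsubseteq \mathrm{DF}_\circ(\ell')$ from the hypothesis, the updated value remains $\sqsubseteq \mathrm{DF}_\circ(\ell')$.

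For the lower direction $\mathrm{DF}_\circ(\ell) \sqsubseteq \mathrm{Analysis}[\ell]$, I would prove the invariant that every flow $(\ell,\ell') \in F \setminus W$ satisfies $f_\ell^{\Gamma(\ell,\ell')}(\mathrm{Analysis}[\ell]) \sqsubseteq \mathrm{Analysis}[\ell']$. This is vacuous at initialization when $F \subseteq W$. When $(\ell,\ell')$ is dequeued, either the guard fails and the flow immediately satisfies the invariant, or the guard holds and $\mathrm{Analysis}[\ell']$ strictly grows; the new value may invalidate the invariant only for flows $(\ell',\ell'')$, but these are precisely the flows re-enqueued, while flows targeting $\ell'$ or untouched by $\ell'$ still satisfy it (their right-hand sides only grew or did not change). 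Upon termination $W = \mathbf{nil}$, so the invariant applies to every flow, and together with $\widehat{\lambda}_S^\ell \sqsubseteq \mathrm{Analysis}[\ell]$ (preserved throughout since joins never decrease), $\mathrm{Analysis}$ satisfies $T(\mathrm{Analysis}) \sqsubseteq \mathrm{Analysis}$, where $T$ is the monotone operator defined by $T(A)(\ell) = \widehat{\lambda}_S^\ell \sqcup \bigsqcup_{(\ell',\ell) \in F} f_{\ell'}^{\Gamma(\ell',\ell)}(A(\ell'))$ whose least fixed point is $\mathrm{DF}_\circ$. By Tarski, $\mathrm{DF}_\circ \sqsubseteq \mathrm{Analysis}$ pointwise, closing the proof. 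The main obstacle is the bookkeeping in this lower-direction invariant: tracking exactly which flows can be ``demoted out of'' the invariant by an update of $\mathrm{Analysis}[\ell']$ and confirming that the algorithm re-enqueues precisely those flows; the rest is a routine monotone-fixed-point argument.
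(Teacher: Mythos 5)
Your proposal is correct and follows the same overall decomposition as the paper: termination via the Ascending Chain Condition on $\widehat{L}$ together with finiteness of the flow set, the upper bound $\mathrm{Analysis}[\ell] \sqsubseteq \operatorname{DF}_{\circ}(\ell)$ as a loop invariant using monotonicity of $f_{\ell}^{\Gamma(\ell,\ell')}$ and the identity $f_{\ell}^{\Gamma(\ell,\ell')}(\operatorname{DF}_{\circ}(\ell)) = \operatorname{DF}_{\bullet}(\ell)(\ell') \sqsubseteq \operatorname{DF}_{\circ}(\ell')$, and the lower bound by showing that on termination $\mathrm{Analysis}$ is a pre-fixed point of the constraint system, whence leastness of $\operatorname{DF}_{\circ}$ finishes the argument. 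The one place you genuinely diverge is in establishing the post-condition $\forall (\ell,\ell') \in F,\ f_{\ell}^{\Gamma(\ell,\ell')}(\mathrm{Analysis}[\ell]) \sqsubseteq \mathrm{Analysis}[\ell']$: the paper argues by contradiction, case-splitting on whether $\mathrm{Analysis}[\ell]$ was last updated in the initialization or in the worklist loop, whereas you maintain the positive invariant that the inequality holds for every flow in $F \setminus W$ and observe that the algorithm re-enqueues exactly the flows $(\ell',\ell'')$ whose left-hand sides may have grown. Your formulation is the more standard worklist-correctness invariant and is arguably tighter: it makes explicit the bookkeeping that the paper's contradiction argument (``the iteration(s) with $(\ell,\ell')$ establish and maintain the invariant'') leaves somewhat informal, including the subtlety that a flow may appear in $W$ with multiplicity. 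The paper's version, in exchange, avoids having to reason about membership in $W$ at all. Both are sound; the appeal to Tarski in your final step is just an explicit justification of the paper's unstated claim that $\operatorname{DF}_{\circ}$ is the least solution dominating any pre-fixed point.
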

\begin{proof} 
Let $P$ be the program being analyzed with an $\mathcal{L}$-labeling.
\\\\\textbf{Termination.} The first and third parts of the algorithm are loops over finite sets, so they trivially terminate.

For the second part, observe that $|W| \leq |\mathcal{L}|^2$. That is, in the worst case there is a flow between every block in the program in both directions. In each iteration with $(\ell,\ell')$, if $f_{\ell}^{\Gamma(\ell,\ell')}(\mathrm{Analysis}[\ell]) \not\sqsubseteq \mathrm{Analysis}[\ell']$ then up to $\mathcal{L}$ (finite number) elements are added to $W$. Furthermore, $\mathrm{Analysis}[\ell']$ is set to $\mathrm{Analysis}[\ell'] \sqcup f_{\ell}^{\Gamma(\ell,\ell')}(\mathrm{Analysis}[\ell])$, which is strictly greater than its previous value since $f_{\ell}^{\Gamma(\ell,\ell')}(\mathrm{Analysis}[\ell]) \not\sqsubseteq \mathrm{Analysis}[\ell']$. Since $\widehat{L}$ satisfies the Ascending Chain Condition, this can only happen a finite number of times.
\\\\\textbf{Correctness.} We first show that the following is an invariant of the loop in the second part of the algorithm.
\begin{equation*}
    \forall \ell \in \mathcal{L},\; \mathrm{Analysis}[\ell] \sqsubseteq \operatorname{DF}_{\circ}(\ell)
\end{equation*}
Observe that, since $\widehat{\lambda} \sqsubseteq \operatorname{DF}_{\circ}(\ell)$ for all $\ell \in S$, the loop invariant holds after the first part of the algorithm, so it holds before the loop begins.

In the loop there are two cases: either a single element is removed from $W$, thus maintaining the invariant, or $\mathrm{Analysis}$ changes for one $\ell'$. Suppose the latter case, with $(\ell,\ell') \in F$ as the flow of the current iteration. Then,
\begin{align*}
    new\mathrm{Analysis}[\ell'] &= old\mathrm{Analysis}[\ell'] \sqcup f_{\ell}^{\Gamma(\ell,\ell')}(old\mathrm{Analysis}[\ell])\\
    &\sqsubseteq \operatorname{DF}_{\circ}(\ell') \sqcup f_{\ell}^{\Gamma(\ell,\ell')}(\operatorname{DF}_{\circ}(\ell))&\qquad& (1)\\
    &= \operatorname{DF}_{\circ}(\ell')&\qquad& (2)
\end{align*}
(1) holds by the loop invariant and monotonicity of $f_{\ell}^{\Gamma(\ell,\ell')}$. (2) holds since $f_{\ell}^{\Gamma(\ell,\ell')}(\operatorname{DF}_{\circ}(\ell)) = \operatorname{DF}_{\bullet}(\ell)(\ell')$ and $\operatorname{DF}_{\circ}(\ell') \sqsupseteq \operatorname{DF}_{\bullet}(\ell)(\ell')$ by definition. Thus, the invariant holds for the loop in the second part.

Now, upon termination of the loop in the second part, we would like to prove
\begin{equation*}
    \forall (\ell,\ell') \in F,\; f_{\ell}^{\Gamma(\ell,\ell')}(\mathrm{Analysis}[\ell]) \sqsubseteq \mathrm{Analysis}[\ell']
\end{equation*}
Assume, for the sake of eventual contradiction, that $\exists (\ell,\ell') \in F$ such that $f_{\ell}^{\Gamma(\ell,\ell')}(\mathrm{Analysis}[\ell]) \not\sqsubseteq \mathrm{Analysis}[\ell']$.

There are two cases to be considered: either $\mathrm{Analysis}[\ell]$ was last updated in the first part, or it was last updated in the second part.

Assume the former. Then, since we have assumed $f_{\ell}^{\Gamma(\ell,\ell')}(\mathrm{Analysis}[\ell]) \not\sqsubseteq \mathrm{Analysis}[\ell']$, the iteration(s) with $(\ell,\ell')$ establish and maintain the invariant $f_{\ell}^{\Gamma(\ell,\ell')}(\mathrm{Analysis}[\ell]) \sqsubseteq \mathrm{Analysis}[\ell']$ (since $\mathrm{Analysis}[\ell]$ was last updated in the first part, so $\mathrm{Analysis}[\ell']$ can grow larger while $\mathrm{Analysis}[\ell]$ remains constant). This is a contradiction.

Thus we are left with the latter case. If $\mathrm{Analysis}[\ell]$ was last updated in the second part, then it must have immediately been followed by adding $(\ell,\ell')$ to $W$. Then, when we have $(\ell,\ell')$ in the iteration, we establish and maintain the invariant $f_{\ell}^{\Gamma(\ell,\ell')}(\mathrm{Analysis}[\ell]) \sqsubseteq \mathrm{Analysis}[\ell']$ just as previously. This is a contradiction.

Thus both cases leave us with a contradiction, proving the statement. Furthermore, we clearly have from this that

\begin{equation*}
    \forall \ell' \in \mathcal{L},\; \mathrm{Analysis}[\ell'] \sqsupseteq \bigsqcup_{(\ell,\ell') \in F} f_{\ell}^{\Gamma(\ell,\ell')}(\mathrm{Analysis[\ell]})
\end{equation*}
Observe that, after the loop in the first part, we maintain the invariant
\begin{equation*}
    \forall \ell \in S,\; \widehat{\lambda} \sqsubseteq \mathrm{Analysis}[\ell]
\end{equation*}
and so putting these together gives
\begin{align*}
    \mathrm{Analysis[\ell]} &\sqsupseteq \widehat{\lambda}_{S}^{\ell} \;\sqcup\; \bigsqcup_{(\ell,\ell') \in F} f_{\ell}^{\Gamma(\ell,\ell')}(\mathrm{Analysis[\ell]})
\end{align*}
Since $\operatorname{DF}_{\circ}$ is the least solution to the above constraint, we then have that
\begin{equation*}
    \operatorname{DF}_{\circ}(\ell) \sqsubseteq \mathrm{Analysis}[\ell]
\end{equation*}
Then, since the (earlier proved) invariant gives us $\mathrm{Analysis}[\ell] \sqsubseteq \operatorname{DF}_{\circ}(\ell)$, we have
\begin{equation*}
    \forall \ell \in \mathcal{L},\; \mathrm{Analysis}[\ell] = \operatorname{DF_{\circ}}(\ell)
\end{equation*}
which directly gives us
\begin{equation*}
    \forall \ell \in \mathcal{L},\; \operatorname{MFP}_{\circ}(\ell) = \operatorname{DF_{\circ}}(\ell)
\end{equation*}
upon termination of the third part.
\end{proof}
After this, we have that $\operatorname{DF}_{\bullet}$ is exactly defined from what we have already computed (which is done in $\operatorname{MFP}_{\bullet}$).

\section{Appendix}

\subsection{The New Property Space: An Alternative Approach}

Consider Section 4.2.1 (and only read this section after having read Section 4.2.1). What if $\dom{f} \not\subseteq \dom{g}$? Then we have points where one (partial) function is defined and the other is not. In the straightforward approach, this would simply render the two partial functions incomparable. Intuitively, however, for $\Delta \rightharpoonup L$, a partial function being undefined at a context $\delta$ simply means that it has no information in that context (i.e. it does not `deal' with that context). Therefore, we may simply think of it as $\bot_L$. Then, we can \emph{induce} a relation for partial functions that achieves effectively the same thing as that for total functions: for $f,g \in S \rightharpoonup M$
\begin{equation*}
    f \preccurlyeq g \;\text{ iff }\; \forall s \in \dom{f},\; f(s) \sqsubseteq_M {\uparrow}g(s)
\end{equation*}
This relaxes the domain inclusion requirement on orderings. Observe:
\begin{proposition}
    The following are equivalent:
    \begin{myitemize}
        \item $\forall s \in \dom{f},\; f(s) \sqsubseteq_M {\uparrow}g(s)$
        \item $\forall s \in S,\; {\uparrow}f(s) \sqsubseteq_M {\uparrow}g(s)$
    \end{myitemize}
\end{proposition}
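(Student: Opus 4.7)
The plan is to prove the two directions separately, with both following straightforwardly from the definition of the $\uparrow$ operator together with the fact that $\bot_M$ is the least element of $M$.

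For the direction from the second bullet to the first, I would simply observe that $\dom{f} \subseteq S$, so the universally quantified statement over $S$ immediately restricts to one over $\dom{f}$. Moreover, for any $s \in \dom{f}$, the definition of ${\uparrow}f$ gives ${\uparrow}f(s) = f(s)$, so the inequality ${\uparrow}f(s) \sqsubseteq_M {\uparrow}g(s)$ becomes exactly $f(s) \sqsubseteq_M {\uparrow}g(s)$. This direction is essentially by unfolding definitions.

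For the direction from the first bullet to the second, I would proceed by case analysis on an arbitrary $s \in S$. If $s \in \dom{f}$, then ${\uparrow}f(s) = f(s)$, and the hypothesis directly gives $f(s) \sqsubseteq_M {\uparrow}g(s)$, as required. If $s \notin \dom{f}$, then by definition ${\uparrow}f(s) = \bot_M$, and since $\bot_M$ is the least element of $M$ we trivially have $\bot_M \sqsubseteq_M {\uparrow}g(s)$. Combining the two cases yields the second bullet.

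There is no genuine obstacle here; the content of the proposition is entirely definitional, relying only on how $\uparrow$ extends a partial function by $\bot_M$ outside its domain, together with the defining property of the bottom element. The only thing worth being careful about is to remember to invoke both cases of the case split in the forward direction, since it is the case $s \notin \dom{f}$ that is the substantive content connecting the relaxed quantifier over $\dom{f}$ to the full quantifier over $S$.
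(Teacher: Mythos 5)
Your proof is correct; the paper states this proposition without proof, and your two-direction argument (restricting the quantifier using $\dom{f} \subseteq S$ and ${\uparrow}f(s) = f(s)$ on $\dom{f}$, then handling $s \notin \dom{f}$ via ${\uparrow}f(s) = \bot_M \sqsubseteq_M {\uparrow}g(s)$) is exactly the intended definitional argument. Nothing is missing.
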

What is relatively less intuitive is that these are not equivalent to $\forall s \in \dom{g},\; {\uparrow}f(s) \sqsubseteq_M g(s)$ (imagining these functions as graphs helps make a bit more sense of this). 

If this relation seems sketchy to you, then good job! It most certainly is. Attempting to use this would cause a very large contradiction, and that comes from its inability to satisfy antisymmetry.

\begin{lemma}
    $\preccurlyeq$ is not antisymmetric.
\end{lemma}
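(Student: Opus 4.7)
The plan is to exhibit an explicit counterexample: two distinct partial functions $f,g \in S \rightharpoonup M$ that satisfy $f \preccurlyeq g$ and $g \preccurlyeq f$ but are not equal. The trick is to exploit the fact that $\preccurlyeq$ only quantifies over $\dom{f}$ (respectively $\dom{g}$), so any point where \emph{both} are undefined is ignored, and any point where $f$ takes the value $\bot_M$ is automatically ``below'' the extension ${\uparrow}g$.

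First I would fix some nonempty $S$ with a chosen element $s_0 \in S$, and define $f : S \rightharpoonup M$ to be the singleton partial function $f = [s_0 \mapsto \bot_M]$, and $g : S \rightharpoonup M$ to be the empty partial function (i.e.\ $\dom{g} = \emptyset$). Clearly $f \neq g$ since $\dom{f} = \{s_0\} \neq \emptyset = \dom{g}$.

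Next I would verify both $\preccurlyeq$ relations. For $f \preccurlyeq g$, the only point to check is $s_0$: there $f(s_0) = \bot_M$ and ${\uparrow}g(s_0) = \bot_M$ since $s_0 \notin \dom{g}$, so $f(s_0) \sqsubseteq_M {\uparrow}g(s_0)$ holds. For $g \preccurlyeq f$ the universal quantifier is over $\dom{g} = \emptyset$, so it holds vacuously. Hence $f \preccurlyeq g$ and $g \preccurlyeq f$ both hold, yet $f \neq g$, contradicting antisymmetry.

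There really isn't a hard part here; the main thing to highlight is \emph{why} the asymmetric formulation fails where the domain-inclusion version of Section 4.2.1 succeeds. I would end with a brief remark that the failure is exactly the phenomenon flagged just before the lemma: the relaxed definition treats ``undefined at $s$'' as $\bot_M$ only when convenient (on the $\dom{f}$ side), so two functions that disagree solely by having or not having $\bot_M$ entries in their explicit domain become mutually $\preccurlyeq$-related without being equal.
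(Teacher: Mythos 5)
Your proof is correct and takes essentially the same approach as the paper: the paper also uses two constant-$\bot_M$ partial functions with unequal domains, of which your pair $[s_0 \mapsto \bot_M]$ and the empty function is a concrete instance. No gaps.
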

\begin{proof}
    Let
    \begin{align*}
        f_1,f_2 : S \rightharpoonup M
    \end{align*}
    with $\dom{f_1} \neq \dom{f_2}$ be defined trivially as $f_1(s) = \bot_M$ and $f_2(s) = \bot_M$ over their respective domains. Since their domains are not equal, we have that $f_1 \neq f_2$. Following our definition of $\preccurlyeq$, however, we have that $f_1 \preccurlyeq f_2$ and $f_2 \preccurlyeq f_1$. Thus antisymmetry is violated.
\end{proof}

Fortunately, Proposition 5.1 sets us up for a solution to this problem. We shall instead define $(S \rightharpoonup M) / {\upeq}$ as the set of equivalence classes in $S \rightharpoonup M$ over $\upeq$, where
\begin{equation*}
    f \upeq g \;\text{ iff }\; {\uparrow}f = {\uparrow}g
\end{equation*}
and we are now treating partial functions as their lifted total functions. This is the quotient lattice over $\upeq$.

\begin{proposition}
    $\uparrow$ is a closure operator on $(S \rightharpoonup M,\sqsubseteq)$.
\end{proposition}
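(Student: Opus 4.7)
The plan is to verify the three defining properties of a closure operator on the poset $(S \rightharpoonup M, \sqsubseteq)$: extensivity ($f \sqsubseteq {\uparrow}f$), monotonicity ($f \sqsubseteq g \implies {\uparrow}f \sqsubseteq {\uparrow}g$), and idempotence (${\uparrow}{\uparrow}f = {\uparrow}f$). Since $\uparrow$ is defined as a total function in $S \rightarrow M$, I will silently use the inclusion $(S \rightarrow M) \subseteq (S \rightharpoonup M)$ so that $\uparrow$ may be regarded as a self-map on $S \rightharpoonup M$.

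For extensivity, I would note that $\dom{f} \subseteq S = \dom{{\uparrow}f}$ trivially, and for every $s \in \dom{f}$ the definition of $\uparrow$ gives ${\uparrow}f(s) = f(s)$, so $f(s) \sqsubseteq_M {\uparrow}f(s)$ holds by reflexivity. Thus $f \sqsubseteq {\uparrow}f$ directly from the definition of $\sqsubseteq$ on $S \rightharpoonup M$.

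For monotonicity, I would fix $f \sqsubseteq g$ and check that ${\uparrow}f(s) \sqsubseteq_M {\uparrow}g(s)$ at every $s \in S$ (the domain-inclusion condition is vacuous since both lifted functions are total). There are two cases: if $s \in \dom{f}$, then by the ordering on $S \rightharpoonup M$ we have $s \in \dom{g}$ and $f(s) \sqsubseteq_M g(s)$, giving ${\uparrow}f(s) = f(s) \sqsubseteq_M g(s) = {\uparrow}g(s)$; if $s \notin \dom{f}$, then ${\uparrow}f(s) = \bot_M \sqsubseteq_M {\uparrow}g(s)$.

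For idempotence, I would observe that ${\uparrow}f \in S \rightarrow M$ is already total, so $\dom{{\uparrow}f} = S$ and applying $\uparrow$ a second time leaves every value unchanged, yielding ${\uparrow}{\uparrow}f = {\uparrow}f$ pointwise. There is no real obstacle here; the only mild subtlety is the type-coercion remark above, and the need to be slightly careful in the monotonicity argument about the case $s \in \dom{g}\setminus\dom{f}$, which is handled by $\bot_M$ being the least element of $M$.
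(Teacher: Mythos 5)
Your proposal is correct and follows essentially the same route as the paper: extensivity and idempotence are immediate from the definition of $\uparrow$, and monotonicity is handled by the same two-case split on whether $s \in \dom{f}$, using $\bot_M$ as the least element in the second case. If anything, your statement of idempotence, ${\uparrow}({\uparrow}f) = {\uparrow}f$, is the correct one (the paper's closing line misprints it as ${\uparrow}({\uparrow}f) = f$), and your explicit remark that $\uparrow$ is being regarded as a self-map via $(S \rightarrow M) \subseteq (S \rightharpoonup M)$ is a worthwhile clarification the paper leaves implicit.
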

\begin{proof}
    Let $f \in S \rightharpoonup M$. Then we clearly have $f \sqsubseteq {\uparrow}f$. Now, let $g \in S \rightharpoonup M$ with $f \sqsubseteq g$. It is clear that $\dom{{\uparrow}f} \subseteq \dom{{\uparrow}g}$. Let $s \in S$. If $s \in \dom{f}$, then $s \in \dom{g}$ so
    \begin{equation*}
        {\uparrow}f(s) = f(s) \sqsubseteq_M g(s) = {\uparrow}g(s)
    \end{equation*}
    If $s \notin \dom{f}$, then
    \begin{equation*}
        {\uparrow}f(s) = \bot_M \sqsubseteq_M {\uparrow}g(s)
    \end{equation*}
    and so both cases together give ${\uparrow}f \sqsubseteq {\uparrow}g$. Finally, it is clear that ${\uparrow}({\uparrow}f) = f$.
\end{proof}

Then, $(S \rightharpoonup M)/{\upeq}$ is isomorphic to the set of all $\uparrow$-closed elements (those $f$ for which ${\uparrow}f = f$). That is to say,
\begin{equation*}
    (S \rightharpoonup M)/{\upeq} \;\cong \{f \in S \rightharpoonup M \mid {\uparrow}f = f\}
\end{equation*}
over $\sqsubseteq$. We will refer to $((S \rightharpoonup M)/{\upeq},\preccurlyeq)$ by $(S \rightharpoonup M)^{{\upeq}}$ for the remainder of this paper. One may question why we study $(S \rightharpoonup M)^{\upeq}$ as opposed to the set of closed partial functions (i.e. the set of total functions). This question is answered later when comparing the two approaches. 

Now, the previously defined relation $\preccurlyeq$ is indeed a partial ordering over $(S \rightharpoonup M)^{\upeq}$, and in fact forms a complete lattice.

Now, we see that the join operation over $(S \rightharpoonup M)^{\upeq}$ becomes
\begin{equation*}
    (f \curlyvee g)(s) = \begin{cases}
        f(s) \sqcup_M g(s) & \text{if } s \in \dom{f} \wedge s \in \dom{g}\\
        f(s) & \text{if } s\in\dom{f}\wedge s\notin\dom{g}\\
        g(s) & \text{if } s\notin\dom{f}\wedge s\in\dom{g}\\
        \text{undefined} & \text{otherwise}
    \end{cases}
\end{equation*}
This operation, in a similar way, is derived from the join operation over total functions. The join operation over total functions is defined as $(f \sqcup g)(s) = f(s) \sqcup_M g(s)$. The following lemma shows that the join operation over partial functions may be treated `similarly' to that of total functions.
\begin{lemma}
    ${\uparrow}(f\curlyvee g) \upeq {\uparrow}f \sqcup {\uparrow}g$
\end{lemma}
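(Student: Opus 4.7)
The plan is to unfold the equivalence $\upeq$, reduce the claim to a pointwise equality of total functions, and then verify it by a case analysis on where $s$ lies with respect to $\dom{f}$ and $\dom{g}$.

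First I would observe that $\upeq$ is defined in terms of $\uparrow$-lifts, so ${\uparrow}(f\curlyvee g)\upeq{\uparrow}f\sqcup{\uparrow}g$ is equivalent to ${\uparrow}({\uparrow}(f\curlyvee g))={\uparrow}({\uparrow}f\sqcup{\uparrow}g)$. Since ${\uparrow}f$ and ${\uparrow}g$ are total (their domain is all of $S$), the join ${\uparrow}f\sqcup{\uparrow}g$, whose domain is $\dom{{\uparrow}f}\cup\dom{{\uparrow}g}=S$, is itself total; likewise ${\uparrow}(f\curlyvee g)$ is total. Applying $\uparrow$ to a total function is the identity, so the claim reduces to the stronger pointwise equality ${\uparrow}(f\curlyvee g)(s)={\uparrow}f(s)\sqcup_M{\uparrow}g(s)$ for every $s\in S$.

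Next I would fix $s\in S$ and split into four cases on its membership in $\dom{f}$ and $\dom{g}$. In each case, both sides should collapse to the same element of $M$:
\begin{itemize}
    \item If $s\in\dom{f}\cap\dom{g}$, then $s\in\dom{f\curlyvee g}$ with $(f\curlyvee g)(s)=f(s)\sqcup_M g(s)$, so the left side is $f(s)\sqcup_M g(s)$; the right side is also $f(s)\sqcup_M g(s)$ by definition of $\uparrow$.
    \item If $s\in\dom{f}\setminus\dom{g}$, then $(f\curlyvee g)(s)=f(s)$, so the left side is $f(s)$; the right side is $f(s)\sqcup_M\bot_M=f(s)$.
    \item The case $s\in\dom{g}\setminus\dom{f}$ is symmetric.
    \item If $s\notin\dom{f}\cup\dom{g}$, then $s\notin\dom{f\curlyvee g}$, so the left side is $\bot_M$; the right side is $\bot_M\sqcup_M\bot_M=\bot_M$.
\end{itemize}

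Since these four cases exhaust $S$ and yield equality in each, the pointwise equality holds, and hence so does the original $\upeq$-equivalence. There is no real obstacle here: the proof is entirely definitional once one recognizes that both sides are already $\uparrow$-closed, so the burden of the argument is just careful bookkeeping of which branch of the definition of $\curlyvee$ (respectively $\sqcup$) applies in each domain region.
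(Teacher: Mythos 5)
Your proof is correct: the reduction to a pointwise identity of total functions via the definition of $\upeq$ is valid, and the four-case analysis exhausts $S$ and checks out in each case (using $m \sqcup_M \bot_M = m$). The paper leaves this proof to the reader, and your argument is exactly the direct definitional verification intended.
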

\begin{proof}
    This proof is left for the reader.
\end{proof}
Lemma 5.2 also shows that the join operation here is (modulo $\upeq$) the same as the previous approach.
\begin{corollary}
    $f\curlyvee g \upeq f \sqcup g$
\end{corollary}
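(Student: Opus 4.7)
The plan is to chain together Lemma 5.2 and Proposition 4.1, both of which describe $\uparrow$ applied to a join of partial functions in terms of the join of their lifts. Lemma 5.2 gives ${\uparrow}(f \curlyvee g) \upeq {\uparrow}f \sqcup {\uparrow}g$, while Proposition 4.1 gives the analogous identity ${\uparrow}(f \sqcup g) = {\uparrow}f \sqcup {\uparrow}g$ for the earlier join operation. Combining these two immediately yields
\begin{equation*}
    {\uparrow}(f \curlyvee g) \upeq {\uparrow}(f \sqcup g).
\end{equation*}

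To close the argument, I would unpack the definition of $\upeq$. By definition, $a \upeq b$ iff ${\uparrow}a = {\uparrow}b$, so the displayed relation above expands to ${\uparrow}{\uparrow}(f \curlyvee g) = {\uparrow}{\uparrow}(f \sqcup g)$. Because $\uparrow$ is a closure operator on $(S \rightharpoonup M, \sqsubseteq)$ (Proposition 5.2), it is idempotent, so this collapses to ${\uparrow}(f \curlyvee g) = {\uparrow}(f \sqcup g)$, which is exactly $f \curlyvee g \upeq f \sqcup g$.

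There is essentially no obstacle here, as both required identities are already in hand; the only subtlety is keeping $\upeq$ (equality of lifts) distinct from literal equality and invoking idempotence of the closure operator to collapse the double lift. In fact, inspecting the definitions of $\sqcup$ and $\curlyvee$ side by side shows that they agree pointwise on $\dom{f} \cap \dom{g}$ (both give $f(s) \sqcup_M g(s)$), on $\dom{f} \setminus \dom{g}$ (both give $f(s)$, since ${\uparrow}g(s) = \bot_M$ there), and symmetrically on $\dom{g} \setminus \dom{f}$, and both are undefined elsewhere; hence one could instead give a direct pointwise proof that $f \curlyvee g = f \sqcup g$ outright, from which the stated equivalence follows a fortiori. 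The Lemma~5.2 route is preferable, however, since it emphasizes the conceptual link between the two property-space constructions.
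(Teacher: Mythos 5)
Your proof is correct and follows essentially the same route as the paper's: both hinge on Lemma~5.2, with your appeal to Proposition~4.1 and idempotence of $\uparrow$ playing the same role as the paper's direct appeal to the definition of $\sqcup$ in the straightforward approach. Your closing observation that $f \curlyvee g$ and $f \sqcup g$ agree pointwise and are therefore literally equal as partial functions is also correct, and is in fact slightly stronger than the stated $\upeq$-equivalence.
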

\begin{proof}
    ${\uparrow}(f\curlyvee g) \upeq f\curlyvee g$. Lemma 5.2 then gives $f\curlyvee g \upeq {\uparrow}f \sqcup {\uparrow}g$. By the definition of $\sqcup$ in the straightforward approach, we then have $f\curlyvee g \upeq f \sqcup g$.
\end{proof}
\begin{lemma}
    $f\curlyvee g$ is the least upper bound of $f,g \in (S \rightharpoonup M)^{\upeq}$.
\end{lemma}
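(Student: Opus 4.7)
The plan is to verify the two defining properties of a least upper bound directly from the definitions of $\curlyvee$ and $\preccurlyeq$, by a short case analysis on whether the point $s$ lies in $\dom{f}\cap\dom{g}$, $\dom{f}\setminus\dom{g}$, or $\dom{g}\setminus\dom{f}$. Proposition 5.1, Corollary 5.1, and Lemma 5.2 together set up an alternative route via the closed representatives and Lemma 4.1, but the direct case analysis is cleaner and does not require chasing through the $\upeq$-equivalences.

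First I would check that $f \preccurlyeq f \curlyvee g$. Fix $s \in \dom{f}$; by definition of $\curlyvee$, either $s \in \dom{g}$ and $(f \curlyvee g)(s) = f(s) \sqcup_M g(s)$, or $s \notin \dom{g}$ and $(f \curlyvee g)(s) = f(s)$. In both cases $f(s) \sqsubseteq_M (f \curlyvee g)(s) = {\uparrow}(f \curlyvee g)(s)$, so $f \preccurlyeq f \curlyvee g$. The bound $g \preccurlyeq f \curlyvee g$ is symmetric.

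Next, suppose $h$ is any upper bound, so $f \preccurlyeq h$ and $g \preccurlyeq h$. I need to show $(f \curlyvee g)(s) \sqsubseteq_M {\uparrow}h(s)$ for every $s \in \dom{f \curlyvee g} = \dom{f} \cup \dom{g}$. For $s \in \dom{f} \cap \dom{g}$, $(f \curlyvee g)(s) = f(s) \sqcup_M g(s)$; the hypotheses give $f(s) \sqsubseteq_M {\uparrow}h(s)$ and $g(s) \sqsubseteq_M {\uparrow}h(s)$, and since ${\uparrow}h(s)$ is then an upper bound of $\{f(s),g(s)\}$ in $M$, it dominates their join. For $s \in \dom{f} \setminus \dom{g}$, $(f \curlyvee g)(s) = f(s) \sqsubseteq_M {\uparrow}h(s)$ directly, and the case $s \in \dom{g} \setminus \dom{f}$ is symmetric. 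Hence $f \curlyvee g \preccurlyeq h$, which together with the previous step establishes the LUB property.

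No step looks genuinely hard; the only subtlety worth flagging is that the statement is about elements of the quotient $(S \rightharpoonup M)^{\upeq}$, so $\preccurlyeq$ and $\curlyvee$ must be well-defined on $\upeq$-classes — Proposition 5.1 supplies exactly what is needed for $\preccurlyeq$, and Lemma 5.2 (or Corollary 5.1) handles $\curlyvee$. Once well-definedness is invoked, the case analysis above transfers unchanged to the quotient.
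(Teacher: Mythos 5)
Your proof is correct and matches the route the paper intends: the paper's own proof of this lemma simply defers to the case analysis of Lemma 4.1 adapted to $\preccurlyeq$, which is exactly the argument you carry out (splitting on $\dom{f}\cap\dom{g}$, $\dom{f}\setminus\dom{g}$, $\dom{g}\setminus\dom{f}$). Your explicit remark on well-definedness over $\upeq$-classes via Proposition 5.1 and Lemma 5.2 is a welcome addition the paper leaves implicit.
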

The proof is similar to that of Lemma 4.1, except using the new partial ordering. Similarly, $(S \rightharpoonup M)^{\upeq}$ is a complete lattice, which follows similarly from Corollary 4.1. The last necessary condition is that our property space $(\Delta \rightharpoonup L)^{\upeq}$ satisfies the Ascending Chain Condition.

\begin{lemma}
    $\widehat{L}$ satisfies the Ascending Chain Condition.
\end{lemma}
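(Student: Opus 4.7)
The plan is to reduce this claim to Corollary 4.2, which already established the ACC for the straightforward space $\Delta \rightharpoonup L$. The key is to build a bridge between the quotient ordering $\preccurlyeq$ and the pointwise ordering $\sqsubseteq$, via the $\uparrow$-closed representatives of each equivalence class.

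First, I would exploit the isomorphism $(\Delta \rightharpoonup L)^{\upeq} \cong \{f \in \Delta \rightharpoonup L \mid {\uparrow}f = f\}$ stated just before the lemma. The $\uparrow$-closed representatives are exactly the total functions, so every class $[\hat{l}]$ has the canonical representative ${\uparrow}\hat{l}$. Given an ascending chain $[\hat{l}_1] \preccurlyeq [\hat{l}_2] \preccurlyeq \cdots$ in $\widehat{L}$, I would replace it by the sequence $({\uparrow}\hat{l}_i)_i$ of closed representatives and argue in $\Delta \rightharpoonup L$.

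Next, I would translate the ordering using Proposition 5.1: $\hat{l}_i \preccurlyeq \hat{l}_{i+1}$ is equivalent to $\forall \delta \in \Delta,\ {\uparrow}\hat{l}_i(\delta) \sqsubseteq_L {\uparrow}\hat{l}_{i+1}(\delta)$. Since the lifts are total, the domain-inclusion clause of the straightforward order is automatic, so this pointwise condition is exactly ${\uparrow}\hat{l}_i \sqsubseteq {\uparrow}\hat{l}_{i+1}$ in the straightforward partial order on $\Delta \rightharpoonup L$. Hence the lifted sequence is a $\sqsubseteq$-ascending chain.

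Finally, Corollary 4.2 forces this lifted chain to stabilize, giving some $n$ with ${\uparrow}\hat{l}_n = {\uparrow}\hat{l}_{n+1} = \cdots$. By the definition of $\upeq$ this is $\hat{l}_n \upeq \hat{l}_{n+1} \upeq \cdots$, so the original chain stabilizes in $\widehat{L}$. I do not anticipate a real obstacle here: all the heavy lifting was done upstream in setting up the quotient, verifying that $\uparrow$ is a closure operator, and proving Corollary 4.2. The only point that requires care is keeping the two orderings $\preccurlyeq$ and $\sqsubseteq$ cleanly separated and noting that Proposition 5.1, when specialized to pairs of $\uparrow$-closed functions, makes the order-embedding $[\hat{l}] \mapsto {\uparrow}\hat{l}$ an injection that both preserves and reflects the order, so stabilization transfers back through it.
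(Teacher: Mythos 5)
Your proof is correct. It shares its essential first step with the paper's: both use Proposition 5.1 to convert $\preccurlyeq$ on classes into the pointwise order $\forall \delta \in \Delta,\ {\uparrow}\widehat{l}_i(\delta) \sqsubseteq_L {\uparrow}\widehat{l}_{i'}(\delta)$ on the lifted representatives. Where you diverge is in how you finish: you note that, since the lifts are total, domain inclusion is automatic, so the lifted sequence is literally a $\sqsubseteq$-chain in the straightforward space $\Delta \rightharpoonup L$, and you then delegate stabilization to Corollary 4.2 and pull it back through $\upeq$. The paper instead re-runs the counting argument inline (only finitely many $\delta \in \Delta$, each of which can strictly increase only finitely often by the ACC on $L$), which duplicates the content of Lemma 4.2. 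Your factoring is the more economical of the two and makes explicit the order-embedding $[\widehat{l}] \mapsto {\uparrow}\widehat{l}$ into $(\Delta \rightharpoonup L, \sqsubseteq)$ that the paper only uses implicitly; the paper's version buys self-containedness at the cost of repetition. Both arguments are sound, and your observation that well-definedness of the lift on classes is exactly the content of $\upeq$ is the only point of care, which you handle.
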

\begin{proof}
    Suppose $(\widehat{l}_i)_{i \in \mathcal{I}}$ is a chain in $\widehat{L}$, and pick an arbitrary $i,i' \in \mathcal{I}$ with $\widehat{l}_i \sqsubseteq \widehat{l}_{i'}$. This implies 
    \begin{equation*}
        \forall \delta \in \dom{\widehat{l}_i},\; \widehat{l}_i(\delta) \sqsubseteq_L {\uparrow}\widehat{l}_{i'}(\delta)
    \end{equation*}
    and by Proposition 4.1 further implies
    \begin{equation*}
        \forall \delta \in \Delta,\; {\uparrow}\widehat{l}_i(\delta) \sqsubseteq_L {\uparrow}\widehat{l}_{i'}(\delta)
    \end{equation*}
    Assume $\widehat{l}_i \neq \widehat{l}_{i'}$. Then, there is some $\delta \in \Delta$ such that ${\uparrow}\widehat{l}_{i'}(\delta)$ is strictly greater than ${\uparrow}\widehat{l}_i(\delta)$. Since $L$ satisfies the Ascending Chain Condition, this can only happen a finite number of times for that $\delta$, and since $\Delta$ is finite (given our earlier assumption), there are only finitely many $\delta \in \Delta$ to be ``increased.'' 
    
    These two together clearly give that $\widehat{L}$ satisfies the Ascending Chain Condition.
\end{proof}
In fact, Lemma 4.2 holds in this approach as well following a similar proof using the alternative partial order.

\subsubsection{Comparing the Approaches}

While there seem to be many similarities, the two approaches certainly have their differences. Throughout this section, we will denote the straightforward approach by $(S \rightharpoonup M, \sqsubseteq)$ and the alternative approach by $((S \rightharpoonup M)^{\upeq}, \preccurlyeq)$.

One may begin to think that these are equivalent because their join operation is the same. This is not true, however, as we shall now see.
\begin{lemma}
    $((S \rightharpoonup M)^{\upeq}, \preccurlyeq)$ is order-isomorphic to $S \rightarrow M$.
\end{lemma}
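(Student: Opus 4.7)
The plan is to exhibit an explicit order-isomorphism $\Phi : (S \rightharpoonup M)^{\upeq} \rightarrow (S \rightarrow M)$ by sending each equivalence class to the unique $\uparrow$-closed representative, namely $\Phi([f]) = {\uparrow}f$. The first step is to observe that a partial function $f$ satisfies ${\uparrow}f = f$ (as partial functions) if and only if $\dom{f} = S$, so the $\uparrow$-closed elements are precisely the total functions $S \rightarrow M$. Together with the isomorphism already noted in the text between $(S \rightharpoonup M)/{\upeq}$ and the $\uparrow$-closed elements, this immediately identifies $\Phi$ as a map into $S \rightarrow M$, and its well-definedness on equivalence classes is exactly the defining condition of $\upeq$.

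Next I would verify that $\Phi$ is a bijection. Injectivity is immediate: if ${\uparrow}f = {\uparrow}g$, then $f \upeq g$ by definition and so $[f] = [g]$. Surjectivity is equally direct: any total $g \in S \rightarrow M$ equals ${\uparrow}g$, hence is $\Phi([g])$.

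The main step is order-preservation in both directions. Unpacking definitions, $[f] \preccurlyeq [g]$ in $(S \rightharpoonup M)^{\upeq}$ means $\forall s \in \dom{f},\; f(s) \sqsubseteq_M {\uparrow}g(s)$, while $\Phi([f]) \sqsubseteq \Phi([g])$ in $S \rightarrow M$ is the usual pointwise ordering $\forall s \in S,\; {\uparrow}f(s) \sqsubseteq_M {\uparrow}g(s)$. Proposition 5.1 states exactly that these two conditions are equivalent, so the equivalence of the orderings is immediate, giving both $\Phi$ and $\Phi^{-1}$ monotonicity. A small thing to double-check along the way is that the definition of $\preccurlyeq$ is invariant under choice of representative of $[f]$ and $[g]$; this also follows from Proposition 5.1, since the characterization $\forall s \in S,\; {\uparrow}f(s) \sqsubseteq_M {\uparrow}g(s)$ depends only on ${\uparrow}f$ and ${\uparrow}g$.

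The main obstacle, as far as there is one, is simply being careful with the quotient: one must keep straight which side of the isomorphism is the equivalence class and which is the representative, and ensure that the order $\preccurlyeq$ is genuinely well-defined on classes. Once Proposition 5.1 is invoked, however, everything collapses to a routine verification, and no further technical machinery is required.
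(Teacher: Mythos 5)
Your proposal is correct and follows essentially the same route as the paper: the same map $[f] \mapsto {\uparrow}f$, the same bijectivity argument, and the same appeal to Proposition 5.1 for order-preservation. Your additional checks (that the $\uparrow$-closed elements are exactly the total functions, that $\preccurlyeq$ is well-defined on classes, and that the inverse is also monotone) are welcome points of care that the paper's proof leaves implicit.
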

\begin{proof}
    Let $[f] \in (S \rightharpoonup M)^{\upeq}$ be the \emph{equivalence class} represented by $f$ (that is, $\{g \mid f \upeq g\}$). Then, define $\varphi : (S \rightharpoonup M)^{\upeq} \rightarrow (S \rightarrow M)$ by
    \begin{equation*}
        \varphi([f]) = {\uparrow}f
    \end{equation*}
    and observe that $\varphi$ is well-defined since $\upeq$ is an equivalence relation. Furthermore, $\varphi$ is clearly injective since if $[f],[g]$ map to the same element, then ${\uparrow}f = {\uparrow}g$ which means $[f] = [g]$).

    Now, let $f \in S \rightarrow M$. Then, since $(S \rightarrow M) \subseteq (S \rightharpoonup M)$, we have that $[f] \in (S \rightharpoonup M)^{\upeq}$, and
    \begin{equation*}
        \varphi([f]) = f
    \end{equation*}
    so $\varphi$ is surjective, and thus a bijection.

    Now we show that $\varphi$ is monotone. Let $f,g \in (S \rightharpoonup M)^{\upeq}$ with $f \preccurlyeq g$. Then, by Proposition 5.1,
    \begin{equation*}
        \forall s \in S,\; {\uparrow}f(s) \sqsubseteq_M {\uparrow}g(s)
    \end{equation*}
    so ${\uparrow}f \sqsubseteq_{S \rightarrow M} {\uparrow}g$. Therefore,
    \begin{equation*}
        \varphi(f) \sqsubseteq_{S \rightarrow M} \varphi(g)
    \end{equation*}
    and we are done.
\end{proof}
This result begs the question: why even study such a relation in such a complex way if we can study it in a much simpler way through total functions? The main merit for this is to illustrate the \emph{exact} implementation/computations done in the program (which deals with partial functions), and that they are correct ``by construction'' in some sense (by following the proofs in the implementation). It is indeed correct, however, that it is much easier to simply study the total functions (which has been done quite extensively). For these reasons, we call $(S \rightharpoonup M)^{\upeq}$ the \emph{derived-total functions}.

Furthermore, $((S \rightharpoonup M)^{\upeq}, \preccurlyeq)$ has a much more flexible and rich structure than $(S \rightharpoonup M, \sqsubseteq)$, which has quite a rigid structure. This seems unintuitive since the former is (isomorphic to) a sublattice of the latter, but ``flattening out'' these ``buffers'' allows much more inter-chain comparisons. That is, chains are less isolated in the former lattice.

From now on, we will call $T = (S \rightharpoonup M)^{\upeq}$ (for \emph{(derived)-total}) and $P = S \rightharpoonup M$ (for \emph{partial}).

\begin{proposition}
    Let $f,g \in T$. $f \preccurlyeq g$ and $\dom{f} \subseteq \dom{g}$ if and only if $f \sqsubseteq g$.
\end{proposition}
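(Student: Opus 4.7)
The plan is to unwind both definitions and exploit the domain containment hypothesis to identify $g$ with ${\uparrow}g$ on the relevant set. Recall the two orderings: $f \sqsubseteq g$ means $\dom{f} \subseteq \dom{g}$ together with $f(s) \sqsubseteq_M g(s)$ for every $s \in \dom{f}$, while $f \preccurlyeq g$ means only $f(s) \sqsubseteq_M {\uparrow}g(s)$ for every $s \in \dom{f}$. The bridge between the two is the observation that whenever $s \in \dom{g}$ we have ${\uparrow}g(s) = g(s)$, so under the extra assumption $\dom{f} \subseteq \dom{g}$ the quantified inequalities coincide on $\dom{f}$.

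For the forward direction I would assume $f \preccurlyeq g$ and $\dom{f} \subseteq \dom{g}$, and then directly verify the two clauses of $\sqsubseteq$: the domain clause is literally the hypothesis, and for any $s \in \dom{f}$ we have $s \in \dom{g}$ by domain containment, so ${\uparrow}g(s) = g(s)$, and substituting into $f(s) \sqsubseteq_M {\uparrow}g(s)$ gives $f(s) \sqsubseteq_M g(s)$.

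For the backward direction I would assume $f \sqsubseteq g$. The domain clause $\dom{f} \subseteq \dom{g}$ is part of the definition of $\sqsubseteq$, so it is free. To get $f \preccurlyeq g$, pick $s \in \dom{f}$; by domain containment $s \in \dom{g}$, hence $g(s) = {\uparrow}g(s)$, and the inequality $f(s) \sqsubseteq_M g(s)$ rewrites as $f(s) \sqsubseteq_M {\uparrow}g(s)$.

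I do not expect a genuine obstacle: both directions are one-line rewrites once one notes that ${\uparrow}g$ agrees with $g$ on $\dom{g}$. The only subtlety worth flagging is that the hypothesis $\dom{f} \subseteq \dom{g}$ is essential for the forward direction; without it, $f \preccurlyeq g$ could hold with strictly larger $\dom{f}$ (indeed, Lemma~5.1 already exhibited two $\preccurlyeq$-comparable functions with incomparable domains), and then $f \sqsubseteq g$ would fail its domain clause. This is also why the statement is phrased as a conjunction rather than as $f \preccurlyeq g \iff f \sqsubseteq g$.
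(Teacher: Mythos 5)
Your argument is correct and is exactly the ``direct from definitions'' proof that the paper omits: both directions reduce to the observation that ${\uparrow}g$ agrees with $g$ on $\dom{g}$, so under $\dom{f} \subseteq \dom{g}$ the two quantified inequalities over $\dom{f}$ coincide. Your closing remark on why the domain-containment hypothesis cannot be dropped (via the Lemma~5.1 counterexample) is a correct and worthwhile observation, not a gap.
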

We omit the proof as it is direct from definitions.

\begin{proposition}
    For $f \in P$, ${\uparrow}f = \bigsqcup_{g \in [f]} g$ where $[f]$ is the $\upeq$-equivalence class of $f$ in $T$.
\end{proposition}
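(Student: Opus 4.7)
The plan is to observe that ${\uparrow}f$ already sits inside the equivalence class $[f]$ and dominates every other representative in $(P,\sqsubseteq)$, so the supremum collapses to an achieved maximum rather than a genuine join. This reduces the problem to two small verifications, neither of which should require any new machinery beyond what is already established in Proposition 5.2 and the definition of $\sqsubseteq$ on $P$.

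First I would confirm that ${\uparrow}f \in [f]$. By the idempotence of the closure operator (which falls out of Proposition 5.2 — it was the last line of that proof), ${\uparrow}({\uparrow}f) = {\uparrow}f$, which by definition of $\upeq$ means ${\uparrow}f \upeq f$, placing ${\uparrow}f$ in the equivalence class.

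Next I would show ${\uparrow}f$ is an upper bound of $[f]$ under $\sqsubseteq$. Fix any $g \in [f]$, so ${\uparrow}g = {\uparrow}f$ pointwise on $S$. Since $\dom{{\uparrow}f} = S$, trivially $\dom{g} \subseteq \dom{{\uparrow}f}$, and for each $s \in \dom{g}$ we have $g(s) = {\uparrow}g(s) = {\uparrow}f(s)$, so $g(s) \sqsubseteq_M {\uparrow}f(s)$. By the definition of $\sqsubseteq$ on $S \rightharpoonup M$, this gives $g \sqsubseteq {\uparrow}f$.

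Combining the two points, ${\uparrow}f$ belongs to $[f]$ and bounds every element of $[f]$ from above, so $\operatorname{max}_{\sqsubseteq}([f]) = {\uparrow}f$, which by the definition of $\operatorname{max}$ in Section 2.1 equals $\bigsqcup_{g \in [f]} g$. I do not anticipate a real obstacle here; the only thing to keep track of is that the join is taken in $(P,\sqsubseteq)$ rather than in $T$ under $\preccurlyeq$, which is why the domain-containment clause in $\sqsubseteq$ is what forces the total function ${\uparrow}f$ (with full domain $S$) to dominate every representative.
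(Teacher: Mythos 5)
Your proposal is correct and follows essentially the same route as the paper's own proof: identify ${\uparrow}f$ as a member of $[f]$ (via idempotence of $\uparrow$) that dominates every other representative under $\sqsubseteq$, so the join is an achieved maximum. You simply spell out the upper-bound verification that the paper leaves as ``clearly.''
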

\begin{proof}
    Let $g \in [f]$ be the function where $g = {\uparrow}f$ in $P$. That is, the $g$ which is defined on all of $S$, representing $f$. Then, we clearly have that $g \sqsupseteq g'$ for all $g' \in [f]$, so $\bigsqcup_{g' \in [f]} g' = g = {\uparrow}f$.
\end{proof}

\begin{corollary}
    For $[f] \in T$, $\operatorname{max}_{\sqsubseteq} [f]$ exists, and ${\uparrow}f = \operatorname{max}_{\sqsubseteq} [f]$.
\end{corollary}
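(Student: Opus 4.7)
The plan is to invoke Proposition 5.4 directly and then identify where the supremum lives. Proposition 5.4 already establishes that $\bigsqcup_{g \in [f]} g$ exists in $P$ and equals ${\uparrow}f$. By the definition of $\operatorname{max}_{\sqsubseteq}$ given in Section 2.1, to conclude that $\operatorname{max}_{\sqsubseteq}[f]$ exists and equals ${\uparrow}f$ it suffices to verify the one remaining condition: that ${\uparrow}f$ is itself a member of the set $[f]$.

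The verification is immediate. Unpacking the equivalence relation, ${\uparrow}f \in [f]$ iff ${\uparrow}f \upeq f$, which in turn just says ${\uparrow}({\uparrow}f) = {\uparrow}f$. This is exactly the idempotence half of the closure-operator claim recorded in Proposition 5.2; alternatively, it is transparent from the definition of ${\uparrow}$ itself, since ${\uparrow}f$ is already total on $S$ and lifting it a second time changes nothing (every $s \in S$ lies in $\dom{{\uparrow}f}$, so no replacement by $\bot_M$ is triggered).

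The only subtlety worth flagging is notational: one must remember that $[f]$ is being treated as a subset of the partial-function lattice $(P,\sqsubseteq)$, and that the $\bigsqcup$ and $\operatorname{max}_{\sqsubseteq}$ in question are computed in $P$ rather than in the quotient $T$ ordered by $\preccurlyeq$. Once that distinction is kept straight, the corollary collapses into a one-line consequence of Propositions 5.2 and 5.4, and I expect no real obstacle beyond that bookkeeping.
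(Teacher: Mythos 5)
Your proposal is correct and follows essentially the same route as the paper: the paper's proof of Proposition 5.4 already exhibits ${\uparrow}f$ as a member of $[f]$ that dominates every other member under $\sqsubseteq$, and the corollary is declared ``immediate'' from that; you simply make explicit the membership check ${\uparrow}({\uparrow}f)={\uparrow}f$ and the appeal to the definition of $\operatorname{max}_{\sqsubseteq}$. Your bookkeeping remark about working in $(P,\sqsubseteq)$ rather than $(T,\preccurlyeq)$ is a fair and accurate clarification, not a deviation.
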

\begin{proof}
    Immediate from the proof of Proposition 5.4.
\end{proof}

Something that is harder to see when studying through the set of total functions, however, is the fluidity between these two sets. In fact, they induce multiple very natural Galois Connections.

Define $\alpha : P \rightarrow T$ by $\alpha(f) = {\uparrow}f$ and $\gamma : T \rightarrow P$ as the natural inclusion $\gamma(f) = \operatorname{max}_{\sqsubseteq}f$.
\begin{lemma}
    $(P, \sqsubseteq) \galoiS{\alpha}{\gamma} (T, \preccurlyeq)$ is a Galois Connection (with $\alpha \circ \gamma = 1$).
\end{lemma}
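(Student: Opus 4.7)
The plan is to unpack the definitions of $\alpha$ and $\gamma$, identify elements of $T$ with their unique $\uparrow$-closed (total) representatives via Corollary 5.1, and then verify the Galois adjunction condition $\alpha(f) \preccurlyeq g \iff f \sqsubseteq \gamma(g)$ directly. Monotonicity of both maps will then be automatic from this adjunction, but can also be checked by hand using Propositions 5.1 and 5.3 if preferred.

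First I would dispose of the side claim $\alpha \circ \gamma = 1$. For any $g \in T$, Corollary 5.1 gives $\gamma(g) = \operatorname{max}_{\sqsubseteq} g = {\uparrow}g$, so $\alpha(\gamma(g)) = {\uparrow}({\uparrow}g) = {\uparrow}g$, and modulo $\upeq$ this is $g$ itself in $T$. This shows $\alpha \circ \gamma = 1$ on $T$ and, as a bonus, tells us that $\gamma$ is injective and lands inside the $\uparrow$-closed elements of $P$, which is where $\sqsubseteq$ and $\preccurlyeq$ agree (by Proposition 5.3 with equal domains).

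Next I would verify the adjunction. For the forward direction, suppose $\alpha(f) \preccurlyeq g$, i.e.\ ${\uparrow}f \preccurlyeq g$. By Proposition 5.1 this is equivalent to ${\uparrow}f(s) \sqsubseteq_M {\uparrow}g(s)$ for all $s \in S$. Since $\gamma(g) = {\uparrow}g$ is total and restricts to $g$ on $\dom{g}$, for any $s \in \dom{f}$ we get $f(s) = {\uparrow}f(s) \sqsubseteq_M \gamma(g)(s)$, and trivially $\dom{f} \subseteq S = \dom{\gamma(g)}$, so $f \sqsubseteq \gamma(g)$ in $P$. For the reverse direction, suppose $f \sqsubseteq \gamma(g)$; then $\dom{f} \subseteq S$ and $f(s) \sqsubseteq_M \gamma(g)(s)$ for $s \in \dom{f}$. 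For $s \notin \dom{f}$, ${\uparrow}f(s) = \bot_M \sqsubseteq_M \gamma(g)(s)$. Both cases combined yield ${\uparrow}f(s) \sqsubseteq_M {\uparrow}g(s)$ for all $s \in S$, which by Proposition 5.1 is exactly $\alpha(f) \preccurlyeq g$.

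The hardest part is not any individual calculation but merely keeping the two viewpoints on $T$ straight: elements of $T$ are formally $\upeq$-equivalence classes, yet via Corollary 5.1 each class has a canonical total representative, and it is this identification that makes $\gamma$ well-defined as a map into $P$ and makes $\alpha \circ \gamma = 1$ hold on the nose. Once one commits to this identification, the proof is a short computation using Propositions 5.1--5.3 and Corollary 5.1, and no genuinely new ingredient is required.
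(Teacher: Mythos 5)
Your proof is correct and takes essentially the same route as the paper's: both verify the adjunction $\alpha(f) \preccurlyeq g \iff f \sqsubseteq \gamma(g)$ via Proposition 5.1 together with the identification of each class in $T$ with its total representative ${\uparrow}f$, and both compute $\alpha \circ \gamma = 1$ identically (note the fact $\gamma(g) = \operatorname{max}_{\sqsubseteq} g = {\uparrow}g$ is the paper's Corollary 5.2, not 5.1). The only cosmetic difference is that you unfold the pointwise conditions by hand where the paper invokes Proposition 5.3 to pass from $f \preccurlyeq g$ plus domain inclusion to $f \sqsubseteq \gamma(g)$.
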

\begin{proof}
    Let $f \in P$ and $g \in T$. First assume $\alpha(f) \preccurlyeq g$. Then, ${\uparrow}f \preccurlyeq g = {\uparrow}g$, so by Proposition 5.1, $f \preccurlyeq {\uparrow}g = g$ (and we know $\dom{f} \subseteq \dom{{\uparrow}g}$), and thus
    \begin{equation*}
        f \sqsubseteq g = \gamma(g)
    \end{equation*}
    by Proposition 5.3. Since all of these steps are equivalences, the backwards direction holds as well.

    We now show that $\alpha \circ \gamma = 1$ (which is equivalent to $\alpha$ is surjective and also equivalent to $\gamma$ is injective). Let $f \in T$. Then,
    \begin{equation*}
        (\alpha \circ \gamma)(f) = \alpha(\operatorname{max}_{\sqsubseteq}f) = {\uparrow}\operatorname{max}_{\sqsubseteq}f = \operatorname{max}_{\sqsubseteq}f = {\uparrow}f = f
    \end{equation*}
    in $T$. The second to last equality uses Corollary 5.2.
\end{proof}
So really what is happening between these two approaches is that the alternative approach is simply an \emph{abstraction} of the straightforward approach, which is not entirely easy to see from the difference in domain inclusion requirements.
\begin{proposition}
    $\alpha,\gamma$ are lattice homomorphisms.
\end{proposition}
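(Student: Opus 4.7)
The plan is to verify the four conditions for being a lattice homomorphism: that each of $\alpha$ and $\gamma$ preserves binary joins and binary meets. By Lemma 5.6, $(P,\sqsubseteq) \galoiS{\alpha}{\gamma} (T,\preccurlyeq)$ is a Galois connection, and by the standard adjunction argument the lower adjoint $\alpha$ preserves any joins that exist in $P$, while the upper adjoint $\gamma$ preserves any meets that exist in $T$. This halves the work: it remains only to show that $\alpha$ preserves meets and that $\gamma$ preserves joins.

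For $\gamma$ preserving joins, I would exploit that $\gamma(f) = \operatorname{max}_{\sqsubseteq} f$ is, by Corollary 5.2, exactly the total-function representative of the class $f \in T$. For two such total representatives the two notions of join collapse: since $\dom{\gamma(f)} \cap \dom{\gamma(g)} = S$, the $P$-join $\gamma(f) \sqcup \gamma(g)$ is defined on all of $S$ and agrees pointwise with $\curlyvee$. The resulting total function is itself $\uparrow$-closed and so coincides with $\gamma(f \curlyvee g)$.

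For $\alpha$ preserving meets, I need $\alpha(f \sqcap g) = \alpha(f) \curlywedge \alpha(g)$, where the $T$-meet is the pointwise meet of the maximal (total) representatives, transferred from $S \to M$ via Lemma 5.5. I would verify this pointwise. At $s \in \dom{f} \cap \dom{g}$, both sides evaluate to $f(s) \sqcap_M g(s) = {\uparrow}f(s) \sqcap_M {\uparrow}g(s)$. At $s \notin \dom{f} \cap \dom{g}$, the left-hand side is $\bot_M$ because $s \notin \dom{f \sqcap g}$; the right-hand side is also $\bot_M$ because at least one of ${\uparrow}f(s), {\uparrow}g(s)$ equals $\bot_M$ there. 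So the two total functions agree on all of $S$.

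The main obstacle I anticipate is notational rather than mathematical: the excerpt defines $\curlyvee$ explicitly but only implicitly fixes the meet operation $\curlywedge$ on $T$ (via Lemma 5.5's isomorphism with $S \to M$). Before proving preservation of meets one should record that this meet is pointwise meet of the maximal total representatives; once that identification is in place, the verifications above are essentially formal.
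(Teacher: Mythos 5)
Your proof is correct, and it takes a genuinely different route from the paper's. The paper proves both join-preservation identities by direct computation: $\alpha(f\sqcup g)=\alpha(f)\curlyvee\alpha(g)$ because ${\uparrow}$ may be inserted or dropped freely once one is working in $T$ (Corollary 5.1), and $\gamma(f\curlyvee g)=\gamma(f)\sqcup\gamma(g)$ via Corollaries 5.1 and 5.2 together with Proposition 4.1 (${\uparrow}(f\sqcup g)={\uparrow}f\sqcup{\uparrow}g$); the meet-preservation conditions are then left to the reader. You instead observe that, since Lemma 5.6 exhibits $\alpha$ as the lower adjoint and $\gamma$ as the upper adjoint of a Galois connection, $\alpha$ preserves all existing joins and $\gamma$ all existing meets automatically, and you verify the remaining two conditions ($\gamma$ on joins, $\alpha$ on meets) by pointwise computation with the maximal (total) representatives. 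Your route buys two of the four conditions for free from general adjunction facts and, unlike the paper's proof, actually discharges the meet case rather than deferring it; the paper's route is more self-contained and keeps the mechanics of the closure operator ${\uparrow}$ in view. One small point worth recording explicitly, as you anticipate: the meet $\curlywedge$ on $T$ is nowhere defined in the paper, so before proving $\alpha(f\sqcap g)=\alpha(f)\curlywedge\alpha(g)$ you should fix it as the pointwise meet of maximal representatives (equivalently, transport the meet of $S\rightarrow M$ along the isomorphism of Lemma 5.5). With that convention your case analysis at $s\in\dom{f}\cap\dom{g}$ and at $s\notin\dom{f}\cap\dom{g}$ is exactly right, the latter using $\bot_M\sqcap_M m=\bot_M$; and your join argument for $\gamma$ is just the paper's chain ${\uparrow}(f\curlyvee g)={\uparrow}f\sqcup{\uparrow}g$ read off Lemma 5.2, noting that $\upeq$ collapses to equality on total functions.
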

\begin{proof}
    Let $f,g \in P$. Then,
    \begin{equation*}
        \alpha(f\sqcup g) = {\uparrow}(f\sqcup g) = f\sqcup g = {\uparrow}f \sqcup {\uparrow}g = \alpha(f) \sqcup \alpha(g) = \alpha(f) \curlyvee \alpha(g)
    \end{equation*}
    in $T$ since $\uparrow$ can be added or removed without affecting anything. The last equality follows from Corollary 5.1.

    Now, observe
    \begin{equation*}
        \gamma(f\curlyvee g) = \operatorname{max}_{\sqsubseteq}(f\curlyvee g) = {\uparrow}(f\curlyvee g) = {\uparrow}(f\sqcup g)
    \end{equation*}
    by Corollary 5.2 and 5.1. Now we are in $P$, so we can not add and remove $\uparrow$ as we please. Proposition 4.1, however, gives us
    \begin{equation*}
        {\uparrow}(f\sqcup g) = {\uparrow}f \sqcup {\uparrow}g = \operatorname{max}_{\sqsubseteq}(f) \sqcup \operatorname{max}_{\sqsubseteq}(g) = \gamma(f) \sqcup \gamma(g)
    \end{equation*}
    The meet-homomorphism requirement for $\alpha$ and $\gamma$ is left for the reader.
\end{proof}

\section{Implementation}

We may now proceed to implementation, which serves to be the most flexible part. So long as the user has some way of specifying the necessities of an Implicit Monotone Framework, we have what we need.

A specification language for Monotone Frameworks, FlowSpec, has been proposed and developed \cite{flowspec}. The major benefit to this approach is the succinct specification of flows and transfer functions. They also provide an MFP solver for Monotone Frameworks. This could suffice for solving interprocedural analyses, however it makes the classification of flows (normal, calling, returning) a bit more difficult. FlowSpec provides a more succinct specification than Flix \cite{flix}.

Our implementation utilizes ANTLR4 \cite{antlr} to achieve a Java implementation, and we take a generative approach like what is presented in \cite{pag}. There does exist a java library for program analysis \cite{wala}, however its flexibility for specifying grammars is limiting as opposed to specifying grammars with ANTLR4, and it is much more work on the user.

\section{Conclusion}

In this paper we went over the theoretical framework for language-general dataflow analysis through user specification. We varied from the classical Monotone Framework \cite{Kam1977} and Embellished Monotone Framework \cite{popa} through the introduction of \emph{tagged} flows and inference from \emph{Implicit} Monotone Frameworks.

We also presented a new approach to contexts through the use of partial functions. We developed these new property spaces rigorously to prove correctness of the implementation ``by construction.'' Furthermore, we compared two partial function property spaces, and showed that they work quite well together.

The Galois Connection formed between the derived-total and partial functions hints at some possibility of a meta-analysis. Such an analysis could be inspired by the ``$\text{A}^2\text{I}$'' framework for meta-analyses presented in \cite{cousota2i}.

\section{Acknowledgements}

This project was completed as a senior capstone project for Kansas State University. I would like to give a big acknowledgement to Dr. Torben Amtoft for his invaluable feedback and discussions throughout this project. If it was not for Dr. Amtoft, I would not have indulged into the beautiful field of program analysis. I would also like to thank Dr. Pietro Poggi-Corradini for his feedback and encouragement on this paper.

\bibliographystyle{plain} 
\bibliography{refs} 

\begin{thebibliography}{1}

\bibitem{cousota2i}
Patrick Cousot, Roberto Giacobazzi, and Francesco Ranzato.
\newblock A²i: abstract² interpretation.
\newblock {\em Proc. ACM Program. Lang.}, 3(POPL), 2019.

\bibitem{hopcroft1969formal}
John~E Hopcroft and Jeffrey~D Ullman.
\newblock {\em Formal languages and their relation to automata}.
\newblock Addison-Wesley Longman Publishing Co., Inc., 1969.

\bibitem{Kam1977}
John~B. Kam and Jeffrey~D. Ullman.
\newblock Monotone data flow analysis frameworks.
\newblock {\em Acta Informatica}, 7(3):305--317, Sep 1977.

\bibitem{flix}
Magnus Madsen, Ming-Ho Yee, and Ond\v{r}ej Lhot\'{a}k.
\newblock From {Datalog} to flix: a declarative language for fixed points on lattices.
\newblock {\em SIGPLAN Not.}, 51(6):194–208, Jun 2016.

\bibitem{pag}
Florian Martin.
\newblock {PAG} -- an efficient program analyzer generator.
\newblock {\em International Journal on Software Tools for Technology Transfer}, 2(1):46--67, Nov 1998.

\bibitem{popa}
Flemming Nielson, Hanne~R. Nielson, and Chris Hankin.
\newblock {\em Principles of Program Analysis}.
\newblock Springer Publishing Company, Incorporated, 2010.

\bibitem{antlr}
Terence~John Parr and Kathleen Fisher.
\newblock {LL}(*): the foundation of the {ANTLR} parser generator.
\newblock In {\em ACM-SIGPLAN Symposium on Programming Language Design and Implementation}, 2011.

\bibitem{wala}
J.~Dolby S.~Fink.
\newblock {WALA – The TJ Watson Libraries for Analysis}.
\newblock \url{http://wala.sourceforge.net/}, 2012.

\bibitem{flowspec}
Jeff Smits and Eelco Visser.
\newblock {FlowSpec}: declarative dataflow analysis specification.
\newblock In {\em Proceedings of the 10th ACM SIGPLAN International Conference on Software Language Engineering}, SLE 2017, page 221–231, New York, NY, USA, 2017. Association for Computing Machinery.

\end{thebibliography}

\end{document}